\begin{document}
\title{A Highly Scalable Labelling Approach for Exact Distance Queries in Complex Networks}
\titlenote{Produces the permission block, and copyright information}
  
\author{Muhammad Farhan, Qing Wang, Yu Lin, Brendan McKay} 
\affiliation{ 
    \institution{Australian National University}
    \city{Canberra}
    \country{Australia} 
}
\email{{muhammad.farhan, qing.wang, yu.lin, brendan.mckay}@anu.edu.au}

\renewcommand{\shortauthors}{}

\begin{abstract}
Answering exact shortest path distance queries is a fundamental task in graph theory. Despite a tremendous amount of research on the subject, there is still no satisfactory solution that can scale to billion-scale complex networks. Labelling-based methods are well-known for rendering fast response time to distance queries; however, existing works can only construct labelling on moderately large networks (million-scale) and cannot scale to large networks (billion-scale) due to their prohibitively large space requirements and very long preprocessing time. In this work, we present novel techniques to efficiently construct distance labelling and process exact shortest path distance queries for complex networks with billions of vertices and billions of edges. Our method is based on two ingredients: (i) a scalable labelling algorithm for constructing minimal distance labelling, and (ii) a querying framework that supports fast distance-bounded search on a sparsified graph. Thus, we first develop a novel labelling algorithm that can scale to graphs at the billion-scale. Then, we formalize a querying framework for exact distance queries, which combines our proposed highway cover distance labelling with distance-bounded searches to enable fast distance computation. To speed up the labelling construction process, we further propose a parallel labelling method that can construct labelling simultaneously for multiple landmarks. We evaluated the performance of the proposed methods on 12 real-world networks. The experiments show that the proposed methods can not only handle networks with billions of vertices, but also be up to 70 times faster in constructing labelling and save up to 90\% of labelling space. In particular, our method can answer distance queries on a billion-scale network of around 8B edges in less than 1ms, on average.
\end{abstract}

%
%


\maketitle
\section{INTRODUCTION}\label{section:intro}
Finding the shortest-path distance between a pair of vertices is a fundamental task in graph theory, and has a broad range of applications \cite{backstrom2006group,freeman1977set,sabidussi1966centrality,yahia2008efficient,ukkonen2008searching,vieira2007efficient,maniu2013network}. For example, in web graphs, ranking of web pages based on their distances to recently visited web pages helps in finding the more relevant pages and is referred to as context-aware search \cite{ukkonen2008searching}. In social network analysis, distance is used as a core measure in many problems such as centrality \cite{freeman1977set,sabidussi1966centrality} and community identification \cite{backstrom2006group}, which require distances to be computed for a large number of vertex pairs. However, despite extensive efforts in addressing the shortest-path distance problem for many years, there is still a high demand for scalable solutions that can be used to support analysis tasks over large and ever-growing networks.

Traditionally, one can use the Dijkstra algorithm \cite{tarjan1983data} for weighted graphs or a breadth-first search (BFS) algorithm for unweighted graphs to query shortest-path distances. However, these algorithms are not scalable, i.e., for large graphs with billions of vertices and edges, they may take seconds or even longer to find the shortest-path distance between one pair of vertices, which is not acceptable for large-scale network applications where distances need to be provided in the order of milliseconds. To improve query time, a well-established approach is to precompute and store shortest-path distances between all pairs of vertices in an index, also called \emph{distance labelling}, and then answer a \emph{distance query} (i.e., find the distance between two vertices) in constant time with a single lookup in the index. Recent work \cite{hayashi2016fully} shows that such labelling-based methods are the fastest known exact distance querying methods on moderately large graphs (million-scale) having millions of edges, but still fail to scale to large graphs (billion-scale) due to quadratic space requirements and unbearable indexing construction time.

\begin{figure*}[ht]
\hspace*{-0.3cm}\begin{minipage}{0.60\linewidth}
\begin{center}
(a)\hspace{5cm}(b)    
\end{center}
\vspace{-0.1cm}
\centering
\hspace*{-0.2cm}\includegraphics[scale=0.21]{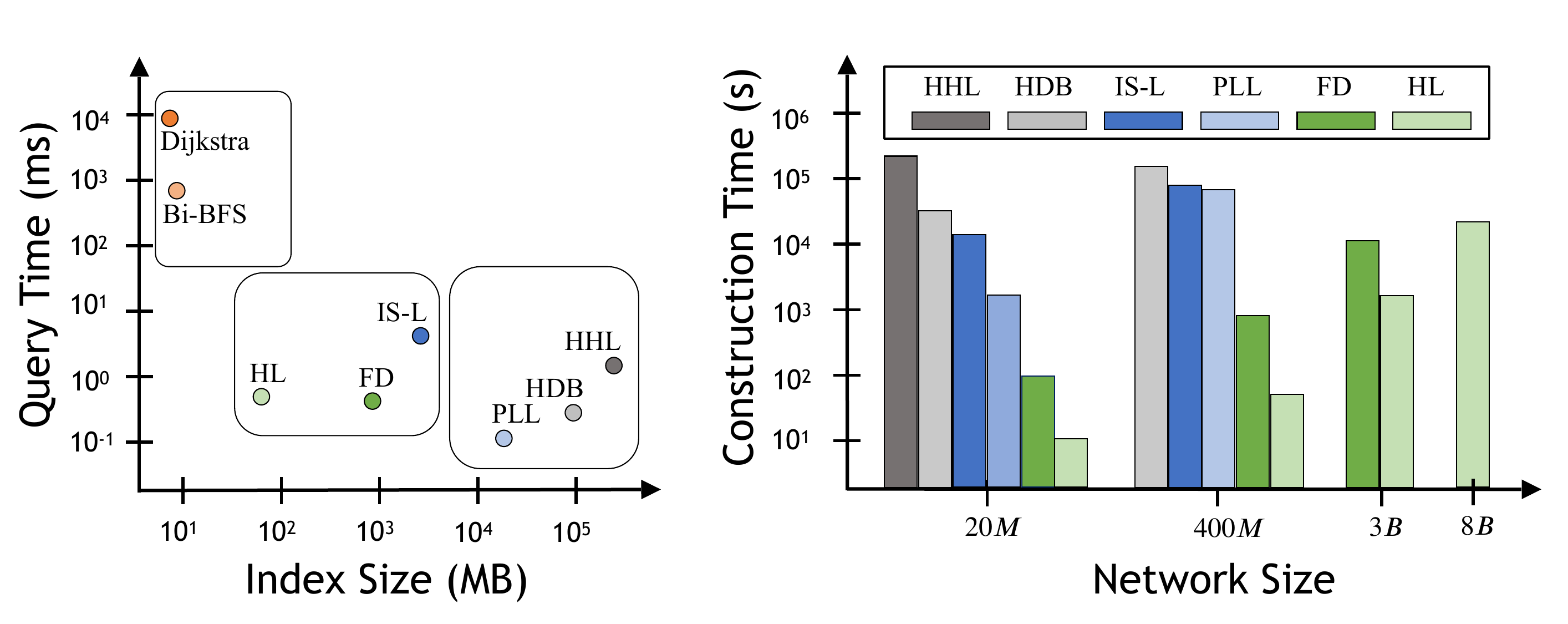}
\end{minipage}
\begin{minipage}{0.39\linewidth}
\vspace*{-0.9cm}
\begin{center}
(c) 
\end{center}
\vspace*{0.3cm}
\scalebox{0.8}{\begin{tabular}{| l || c|c| c| c |}  \hline
   \multirow{2}{*}{Method} & \textsc{Ordering}- & \textsc{2HC}-& \textsc{HWC}- & \multirow{2}{*}{\textsc{Parallel}?} \\
    &\textsc{dependent}?& \textsc{minimal}?& \textsc{minimal}?&\\
   \hline\hline
   HL (ours) & no & n/a &yes & landmarks  \\
   FD \cite{hayashi2016fully} & no & no &no & neighbours \\
   IS-L \cite{fu2013label} & yes &no & no & no \\\hline\hline
   PLL \cite{akiba2013fast} & yes & yes & no & neighbours  \\
   HDB \cite{jiang2014hop} & yes & no & no & no \\
   HHL \cite{abraham2012hierarchical} & yes & no & no & no \\
   \hline
 \end{tabular}}
\end{minipage}
\vspace{-0.2cm}
\caption{High-level overview of the state-of-the-art methods and our proposed method (HL) for exact distance queries: (a) performance w.r.t. query time and labelling size on networks of size up to 400M, (b) scalability w.r.t. labelling construction time and network size, and (c) several important properties related to labelling methods.}
\label{fig:intro}
\end{figure*}

Thus, the question is still open as to how scalable solutions to answer exact distance queries in billion-scale networks can be developed. Essentially, there are three computational factors to be considered concerning the performance of algorithms for answering distance queries: construction time, index size, and query time. Much of the existing work has focused on exploring trade-offs among these computational factors \cite{abraham2011hub,abraham2012hierarchical,akiba2013fast,akiba2012shortest,wei2010tedi,hayashi2016fully,tretyakov2011fast,potamias2009fast,fu2013label,jin2012highway,qiao2014approximate,gubichev2010fast,li2017experimental, chang2012exact}, especially for the 2-hop cover distance labelling \cite{cohen2003reachability,akiba2013fast}.  Nonetheless, to handle large graphs, we believe that a scalable solution for answering exact distance queries needs to have the following desirable characteristics: (1) the construction time of a distance labelling is scalable with the size of a network; (2) the size of a distance labelling is minimized so as to reduce the space overhead; (3) the query time remains in the order of milliseconds, even in graphs with billions of nodes and edges.

In this work, we aim to develop a scalable solution for exact distance queries which can meet the aforementioned characteristics.  Our solution is based on two ingredients: (i) a scalable labelling algorithm for constructing minimal distance labelling, and (ii) a querying framework that supports fast distance-bounded search on a sparsified graph. More specifically, we first develop a novel labelling algorithm that can scale to graphs at the billion-scale. We observed that, for a given number of landmarks, the distance entries from these landmarks to other vertices in a graph can be further minimized if the definition of 2-hop cover distance labelling is relaxed. 
Thus, we formulate a relaxed notion for labelling in this paper, called the \emph{highway cover distance labelling}, and develop a simple yet scalable labelling algorithm that adds a significantly small number of distance entries into the label of each vertex. We prove that the distance labelling constructed by our labelling algorithm is minimal, and also experimentally verify that the construction process is scalable.

Then, we formalize a querying framework for exact distance queries, which combines our proposed highway cover distance labelling with distance-bounded searches to enable fast distance computation. This querying framework is capable of balancing the trade-off between construction time, index size and query time through an offline component (i.e. the proposed highway cover distance labelling) and an online component (i.e. distance-bounded searches). The basic idea is to select a small number of highly central landmarks that allow us to efficiently compute the upper bounds of distances between all pairs of vertices using an offline distance labelling, and then conduct distance-bounded search over a sparsified graph to find exact distances efficiently.  Our experimental results show that the query time of distance queries within this framework is still in millionseconds for large graphs with billions of vertices and edges.

Figure \ref{fig:intro} summarizes the performance of the state-of-the-art methods for exact distance queries \cite{akiba2013fast, fu2013label, abraham2012hierarchical, hayashi2016fully, jiang2014hop, tarjan1983data, pohl1971bi, chang2012exact}, as well as our proposed method in this paper, denoted as HL. In Figure \ref{fig:intro}(a)-\ref{fig:intro}(b), 
we can see that, labelling-based methods PLL \cite{akiba2013fast}, HDB \cite{jiang2014hop}, and HHL \cite{abraham2012hierarchical} can answer distance queries in a considerably small amount of time. However, they have very large space requirements and very long labelling construction time. On the contrary, traditional online search methods such as Dijkstra \cite{tarjan1983data} and bidirectional BFS (denoted as Bi-BFS) \cite{pohl1971bi} are not applicable to large-scale networks where distances need to be provided in the order of milliseconds because of their very high response time. The hybrid methods FD \cite{fu2013label}, IS-L \cite{hayashi2016fully} and HL (our method) combine an offline labelling and an online graph traversal technique, which can provide better trade-offs between query response time and labelling size. In Figure \ref{fig:intro}(b), we can also see that only our proposed method HL can handle networks of size 8B, and is scalable to perform distance queries on networks with billions of vertices and billions of edges. 

Figure \ref{fig:intro}(c) presents a high-level overview for several important properties of labelling methods. The column \textsc{ordering dependent} refers to whether a distance labelling depends on the ordering of landmarks when being constructed by a method. Only our method HL and FD are not ordering-dependent. The columns \textsc{2HC-minimal} and \textsc{HWC-minimal} refer to whether a distance labelling constructed by a method is minimal in terms of the 2-hop cover (2HC) and highway cover (HWC) properties, respectively. PLL is 2HC-minimal, but not HWC-minimal. Our method HL is the only method that is HWC-minimal. The column \textsc{Parallel} refers to what kind of parallelism a method can support. FD and PLL support bit-parallelism for up to 64 neighbours of a landmark. Our method HL supports parallel computation for multiple landmarks, depending on the number of processors. Other methods did not mention any parallelism.

In summary, our contributions in this paper are as follows:
\begin{itemize}
\item We introduce a new labelling property, namely highway cover labelling, which relaxes the standard notion of 2-hop cover labelling. Based on this new labelling property, we propose a highly scalable labelling algorithm that can scale to construct labellings for graphs with billions of vertices and billions of edges. 

\item We prove that the proposed labelling algorithm can construct HWC-minimal labellings, which is independent of any ordering of landmarks. Then, due to this determinstric nature of labelling, we further develop a parallel algorithm which can run parallel BFSs from multiple landmarks to speed up labelling construction.

\item We combine our novel labelling algorithm with online bounded-distance graph traversal to efficiently answer exact distance queries. This querying framework enables us to balance the trade-offs among construction time, labelling size and query time.

\item We have experimentally verified the performance of our methods on 12 large-scale complex networks. The results show that our methods can not only handle networks with billions of vertices, but also be up to 70 times faster in constructing labelling and save up to 90\% of labelling space. 
\end{itemize}


The rest of the paper is organized as follows. In Section \ref{section:preliminaries}, we present basic notations and definitions used in this paper. Then, we discuss a novel labelling algorithm in Section \ref{section:labelling}, formulate the querying framework in Section \ref{section:querying}, and introduce several optimization techniques in Section \ref{section:optimization}. In Section \ref{section:experiments} we present our experimental results and in Section \ref{section:background} we discuss other works that are related to our work here. The paper is concluded in Section \ref{section:conclusion}.

\section{Preliminaries}\label{section:preliminaries}
Let $G = (V, E)$ be a graph where $V$ is a set of vertices and $E \subseteq V \times V$ is a set of edges. We have $n = |V|$ and $m = |E|$. Without loss of generality, we assume that the graph $G$ is connected and undirected in this paper. Let $V' \subseteq V$ be a subset of vertices of $G$. Then the induced subgraph $G[V']$ is a graph whose vertex set is $V'$ and whose edge set consists of all of the edges in $E$ that have both endpoints in $V'$. Let $N_G(v) = \{u \in V | (u, v) \in E \}$ denote a set of neighbors of a vertex $v \in V$ in $G$. 

The \emph{distance} between two vertices $s$ and $t$ in $G$, denoted as $d_G(s, t)$, is the length of the shortest path from $s$ to $t$. We consider $d_G(s, t) = \infty$, if there does not exist a path from $s$ to $t$. For any three vertices $s, u, t \in V$, the following \emph{triangle inequalities} are satisfied: 
\begin{align} \label{distance_metric}
	d_G(s, t) \leq d_G(s, u) + d_G(u, t) \\
    d_G(s, t) \geq |d_G(s, u) - d_G(u, t)|
\end{align}
If $u$ belongs to one of the shortest paths from $s$ to $t$, then $d_G(s, t) = d_G(s, u) + d_G(u, t)$ holds.

Given a special subset of vertices $R\subseteq V$ of $G$, so-called \emph{landmarks}, a \emph{label} $L(v)$ for each vertex $v \in V$ can be precomputed, which is a set of \emph{distance entries} $\{(u_1, \delta_L(u_1, v)), \dots, (u_n, \delta_L(u_n,v)\\)\}$ where  $u_i \in R$ and $\delta_L(u_i, v) = d_G(u_i, v)$ for $i=1,\dots, n$. The set of labels $L = \{L(v)\}_{v \in V}$ is called a \emph{distance labeling} over $G$. The \emph{size} of a distance labelling $L$ is defined as size(L)=$\Sigma_{v\in V}|L(v)|$. 

Using such a \emph{distance labeling L}, we can query the distance between any pair of vertices $s, t \in V$ in graph $G$ as follows,

\label{eq:twoHop_query_distance}
\begin{align}
\text{Q}(s, t, L) = \texttt{min}\{\delta_L(u, s) + \delta_L(u, t) | (u, \delta_L(u, s)) \in  L(s),\notag\\ (u, \delta_L(u, t)) \in  L(t)\}
\end{align}
We define $\text{Q}(s, t, L) = \infty$, if $L(s)$ and $L(t)$ do not share any landmark. If $\text{Q}(s, t, L) = d_G(s, t)$ holds for any two vertices $s$ and $t$ of $G$, $L$ is called a \emph{2-hop cover distance labeling} over $G$ \cite{cohen2003reachability,abraham2012hierarchical}.

Given a graph $G$ and a set of landmarks $R \subseteq V$, the \emph{distance querying problem} is to efficiently compute the shortest path distance $d_G(s,t)$ between any two vertices $s$ and $t$ in $G$, using a distance labeling $L$ over $G$ in which labels may contain distance entries from landmarks in $R$.
\section{Highway Cover Labelling}\label{section:labelling}
In this section, we formulate the highway cover labelling problem and propose a novel algorithm to efficiently construct the highway cover distance labelling over graphs. Then, we provide theoretical analysis of our proposed algorithm.

\subsection{Highway Cover Labelling Problem}

We begin with the definitions of highway and highway cover. 

\begin{definition}(\emph{Highway})~ A highway $H$ is a pair $(R, \delta_H)$, where $R$ is a set of landmarks and $\delta_H$ is a $distance$ $decoding$ $function$, i.e. $\delta_H : R \times R \rightarrow \mathbb{N}^+$, such that for any $\{r_1, r_2\} \subseteq R$ we have $\delta_H(r_1, r_2) = d_G(r_1, r_2)$.
\end{definition}

Given a landmark $r\in R$ and two vertices $s, t\in V\backslash R$ (i.e. $V\backslash R=V-R$), a \emph{$r$-constrained shortest path} between $s$ and $t$ is a path between $s$ and $t$ satisfying two conditions: (1) It goes through the landmark $r$, and (2) It has the minimum length among all paths between $s$ and $t$ that go through $r$. We use $P_{st}$ to denote the set of vertices in a shortest path between $s$ and $t$, and $P^r_{st}$  to denote the set of vertices in a $r$-constrained shortest path between $s$ and $t$.

\begin{definition}\label{def:highway-cover}(\emph{Highway Cover})~ Let $G=(V,E)$ be a graph and $H=(R, \delta_H)$ a highway. Then for any two vertices $s, t \in V\backslash R$ and for any $r\in R$, there exist $(r_i, \delta_L(r_i, s)) \in L(s)$ and $(r_j, \delta_L(r_j, t)) \in L(t)$ such that $r_i \in P_{rs}$ and $r_j \in P_{rt}$, where $r_i$ and $r_j$ may equal to $r$.

\end{definition}

If the label of a vertex $v$ contains a distance entry $(r, \delta_L(r, v))$, we also say that the vertex $v$ is \emph{covered} by the landmark $r$ in the distance labelling $L$. Intuitively, the highway cover property guarantees that, given a highway $H$ with a set of landmarks $R$ and $r\in R$, any $r$-constrained shortest path distance between two vertices $s$ and $t$ can be found using only the labels of these two vertices and the given highway. A distance labelling $L$ is called a \emph{highway cover distance labelling} if $L$ satisfies the highway cover property. \looseness=-1

\begin{example}
Consider the graph $G$ depicted in Figure \ref{fig:highway-cover-labelling}(a), the highway $H$ has three landmarks $\{1,5,9\}$ as highlighted in red in Figure \ref{fig:highway-cover-labelling}(b). Based on the graph in Figure \ref{fig:highway-cover-labelling}(a) and the highway in Figure \ref{fig:highway-cover-labelling}(b), we have $\langle 11, 1, 4 \rangle$ which is a shortest path between the vertices $11$ and $4$ constrained by the landmark $1$, i.e. $1$-constrained shortest path between $11$ and $4$. In contrast, neither of the paths $\langle 11, 10, 9, 1, 4\rangle$ and $\langle 11, 4\rangle$ is a $1$-constrained shortest path between $11$ and $4$.

In Figure \ref{fig:highway-cover-labelling}(b), the outgoing arrows from each landmark point to vertices in $G$ that are covered by this landmark in the highway. The distance labelling in Figure \ref{fig:highway-cover-labelling}(c) satisfies the highway cover property because for any two vertices that are not landmarks and any landmark $r\in\{1,5,9\}$, we can find the $r$-constrained shortest path distance between these two vertices using their labels and the highway.



\begin{figure}[h]
\centering
\includegraphics[scale=0.3]{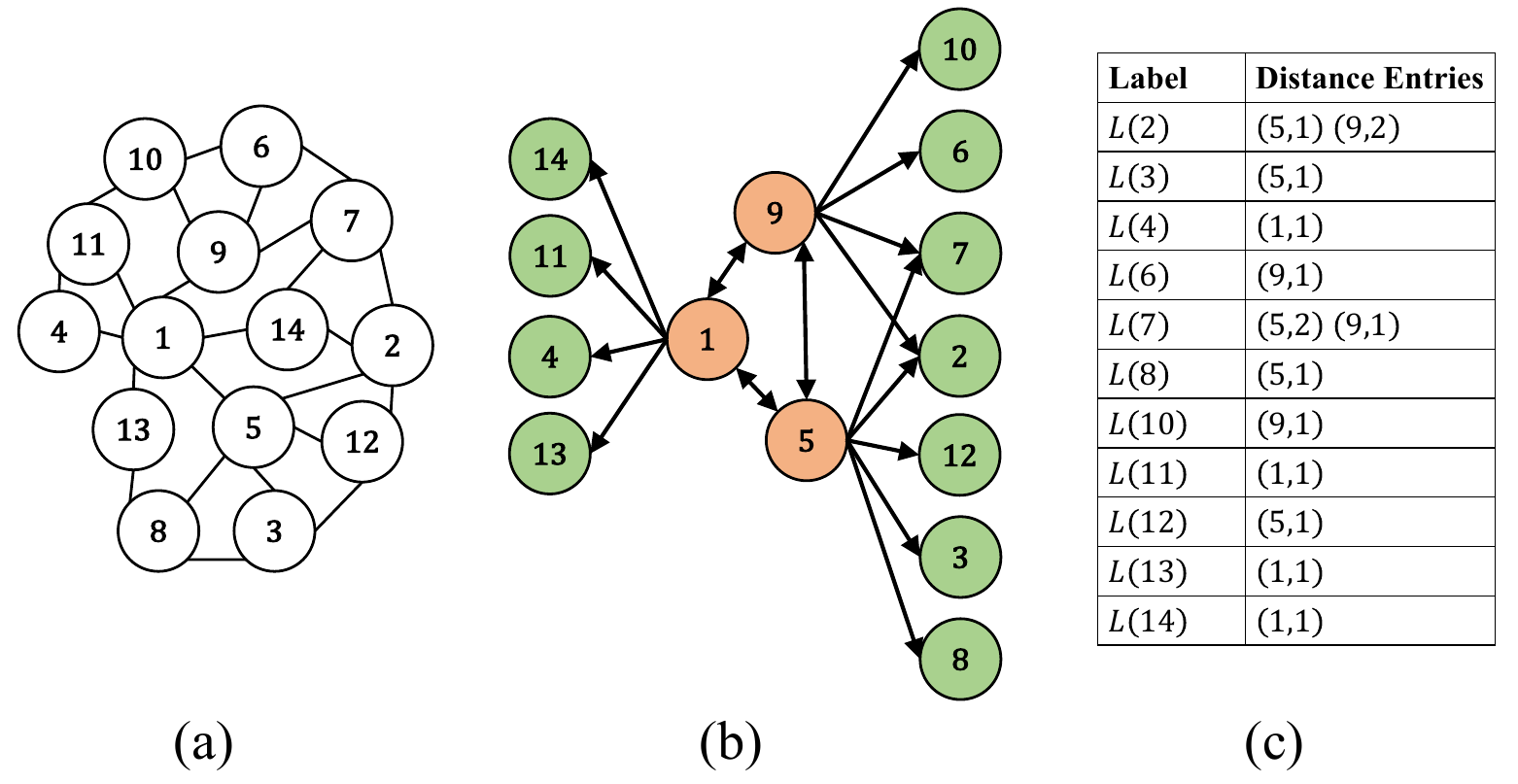}
\vspace{-0.2cm}
\caption{An illustration of highway cover distance labelling: (a) an example graph $G$, (b) a highway structure $H$ and (c) a distance labelling that fulfills the highway cover property over $(G,H)$.}
\label{fig:highway-cover-labelling}
\end{figure}
\end{example}


\begin{definition}(\emph{Highway Cover Labelling Problem})
Given a graph $G$ and a highway $H$ over $G$, the \emph{highway cover labelling problem} is to efficiently construct a highway cover distance labelling $L$. 
\end{definition}

Several choices naturally come up: (1) One is to add a distance entry for each landmark into the label of every vertex in $V-R$, as the approach proposed in \cite{hayashi2016fully}; (2) Another is to use the pruned landmark labelling approach \cite{akiba2013fast} to add the distance entry of a landmark $r$ into the labels of vertices in $V-R$ if the landmark has not been pruned during a BFS rooted at $r$; (3) Alternatively, we can also extend the pruned landmark labelling approach to construct the highway cover labeling by replacing the 2-hop cover pruning condition with the one required by the highway cover as defined in Definition \ref{def:highway-cover} at each step of checking possible labels to be pruned. 

In all these cases, the labelling construction process would not be scalable nor be suitable for large-scale complex networks with billions of vertices and edges. Moreover, these approaches would potentially lead to the construction of distance labellings with different sizes. A question arising naturally is how to construct a minimal highway cover distance labelling without redundant labels?
In a nutshell, it is a challenging task to construct a highway cover distance labelling that can scale to very large networks, ideally in linear time, but also with the minimal labelling size.


\subsection{A Novel Algorithm}
We propose a novel algorithm for solving the highway cover labelling problem, which can construct labellings in linear time. 

The key idea of our algorithm is to construct a \emph{label} $L(v)$ for vertex $v \in V \backslash R$ such that the distance entry $(r_i, \delta_L(r_i, v))$ of each landmark $r_i\in R$ is only added into the label $L(v)$ iff there does not exist any other landmark that appears in the shortest path between $r_i$ and $v$, i.e. $P_{r_iv}\cap R=\{r_i\}$. In other words, if there is another landmark $r\in R$ and $r_i$ is in the shortest path between $r$ and $v$, then $(r_i, \delta_L(r_i, v))$ is added into $L(v)$ iff $r_i$ is the ``closest" landmark to $v$. 
To compute such labels efficiently, we conduct a breadth-first search from every landmark $r_i \in R$ and add distance entries into labels of vertices that do not have any other landmark in their shortest paths from $r_i$.

\begin{example}
Consider vertex $7$ in Figure 2(c), the label $L(7)$ contains the distance entries of landmarks $\{5, 9\}$, but no distance entry of landmark $1$. This is because $5$ and $9$ are the closest landmarks to vertex 7 in the shortest paths $\langle 5, 7\rangle$ and $\langle 9, 7\rangle$, respectively. However, for either of two shortest paths $\langle 1, 9, 7\rangle$ and $\langle 1,5,7\rangle$ between $1$ and $7$, there is another landmark (i.e. $5$ or $9$) that is closer to $7$ compared with $1$ in these shortest paths. Thus the distance entry of landmark 1 is not added into $L(7)$.
\end{example}
Our highway cover labelling approach is described in Algorithm \ref{algo:wholePreprocessingAlgo}. 
Given a graph $G$ and a highway $H$ over $G$, we start with an empty \emph{highway cover distance labelling} $L$, where $L(v) = \emptyset$ for every $v \in V \backslash R$. Then, for each landmark $r_i \in R$, we compute the corresponding distance entries as follows. We use two queues $\mathcal{Q}_{label}$ and $\mathcal{Q}_{prune}$ to process vertices to be labeled or pruned at each level of a breadth-first search (BFS) tree, respectively. We start by processing vertices in $\mathcal{Q}_{label}$. For each vertex $u \in \mathcal{Q}_{label}$ at depth $n$, we examine the children of $u$ at depth $n + 1$ that are unvisited. For each unvisited child vertex $v \in N_G(u)$ at depth $n+ 1$, if $v \in R$ then we prune $v$, i.e., we do not add a distance entry of the current landmark $r_i$ into $L(v)$ and we also enqueue $v$ to the pruned queue $\mathcal{Q}_{prune}$ (Line 11). Otherwise, we add $(r_i, \delta_{BFS}(r_i, v))$ to the \emph{label} of $v$, i.e., we add it into $L(v)$ and we also enqueue $v$ to the labeled queue $\mathcal{Q}_{label}$ (Lines 13-14). Here, $\delta_{BFS}(r_i, v)$ refers to BFS decoded distance from root $r_i$ to $v$. Then we process the pruned vertices in $\mathcal{Q}_{prune}$. These vertices are either landmarks or have landmarks in their shortest paths from $r_i$, and thus do not need to be labeled. Therefore, for each vertex $v \in \mathcal{Q}_{prune}$ at depth $n$, we enqueue all unvisited children of $v$ at depth $n + 1$ to the pruned queue $\mathcal{Q}_{prune}$. We keep processing these two queues, one after the other, until $\mathcal{Q}_{label}$ is empty. \looseness=-1 

\begin{algorithm}
 \SetAlgoLined
 \caption{Constructing the highway cover labelling $L$}
 \label{algo:wholePreprocessingAlgo}
 \KwIn{$G = (V. E)$, $H = (R, \delta_H)$}
 \KwOut{$L$}
 $L(v) \gets \emptyset, \forall v \in V \backslash R$ \\
 \ForEach{$r_i \in R$}{
 	$\mathcal{Q}_{label} \gets \emptyset$ \\
    $\mathcal{Q}_{prune} \gets \emptyset$ \\
    $n \gets 0$ \\
    Enqueue $r_i$ to $\mathcal{Q}_{label}$ and set $r_i$ as the root of BFS \\
    \While{$\mathcal{Q}_{label}$ is not empty}{
      \ForEach{$u \in \mathcal{Q}_{label}$ at depth $n$}{
        \ForEach{unvisited child $v$ of $u$ at depth $n + 1$}{
          \uIf{$v$ is a landmark}{
            Enqueue $v$ to $\mathcal{Q}_{prune}$
          }
          \Else{
            Enqueue $v$ to $\mathcal{Q}_{label}$ \\
            Add \{($r_i$, $\delta_{BFS}(r_i, v)$)\} to $L(v)$
          }
        }
      }
      $n \gets n + 1$ \\
      \ForEach{$v \in \mathcal{Q}_{prune}$ at depth $n$}{
        Enqueue unvisited children of $v$ at depth $n + 1$ to $\mathcal{Q}_{prune}$
      }
    }
 }
 \textbf{return} $L$
\end{algorithm}


\begin{example}
We illustrate how our algorithm conducts pruned BFSs in Figure \ref{fig:highway-cover-algoritm}. The pruned BFS from landmark $1$ is depicted in Figure \ref{fig:highway-cover-algoritm}(a), which labels only four vertices $\{4, 11, 13, 14\}$ because the other vertices are either landmarks or contain other landmarks in their shortest paths to landmark $1$. Similarly, in the pruned BFS from landmark $5$ depicted in Figure \ref{fig:highway-cover-algoritm}(b), only vertices $\{7,2,12,3,8\}$ are labelled, and none of the vertices $4$, $11$, $13$ and $14$ is labelled because of the presence of landmark $1$ in their shortest paths to landmark $5$. Indeed, we can get the distance between landmark $5$ to these vertices by using the highway, i.e. $\delta_{H}(5,1)$, and distance entries in their labels to landmark $1$. The pruned BFS from landmark 9 is depicted in Figure \ref{fig:highway-cover-algoritm}(c), which works in a similar fashion.
\end{example}

Note that, although a highway $H$ is given in Algorithm \ref{algo:wholePreprocessingAlgo}, we can indeed compute the distances $\delta_H$ for a given set of landmarks $R$ along with Algorithm \ref{algo:wholePreprocessingAlgo}.

\begin{figure}[h]
\centering
\includegraphics[scale=0.3]{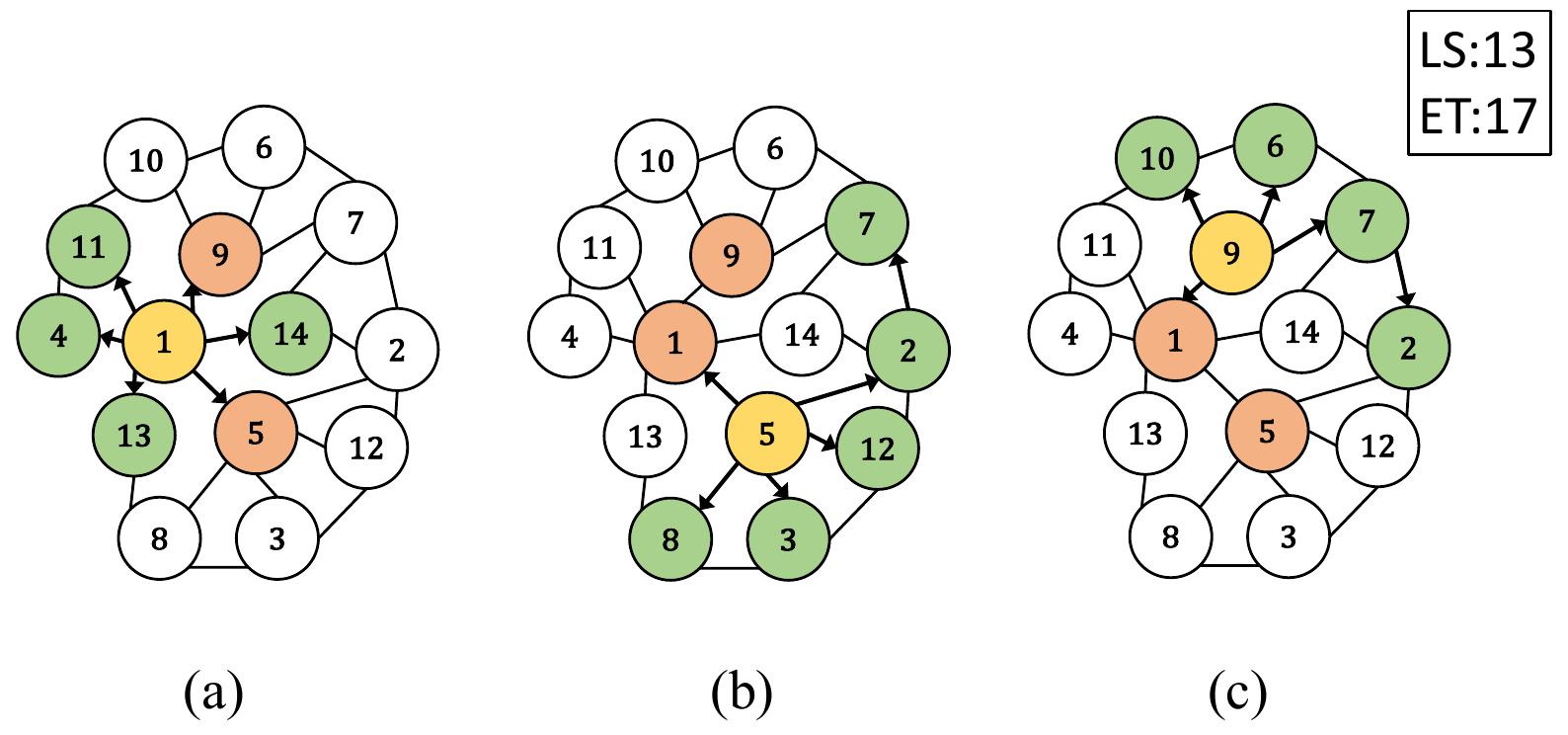}
\caption{An illustration of the highway cover labelling algorithm: (a), (b) and (c) describe the pruned BFSs that are rooted at the landmarks 1, 5 and 9, respectively, where yellow vertices denote roots, green vertices denote those being labeled, red vertices denote landmarks, and white vertices are not labelled. LS and ET at the top right corner denote the labelling size and the number of edges traversed during the pruned BFSs, respectively.}
\label{fig:highway-cover-algoritm}
\end{figure}

\subsection{Correctness} 
Here we prove the correctness of our labelling algorithm. 

\begin{lemma}\label{lemma:p1}
In Algorithm \ref{algo:wholePreprocessingAlgo}, for each pruned BFS rooted at $r_i \in R$,  $(r_i, \delta_L(r_i, v))$ is added into the label of a vertex $v\in V\backslash R$ iff there is no any other landmark appearing in the shortest path between $r_i$ and $v$, i.e., $P_{r_iv}\cap R=\{r_i\}$.
\end{lemma}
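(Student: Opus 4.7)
The plan is to prove Lemma~\ref{lemma:p1} by induction on the BFS depth $d = d_G(r_i, v)$, reducing the question ``is $v$ added to $\mathcal{Q}_{label}$?'' to a local condition involving only the BFS parents of $v$ at depth $d-1$, and then invoking the induction hypothesis on those parents.

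Before starting the induction, I would first spell out the processing order of Algorithm~\ref{algo:wholePreprocessingAlgo}. In the $k$-th iteration of the while loop, the algorithm processes $\mathcal{Q}_{label}$ at depth $k-1$ (adding any non-landmark child at depth $k$ to $\mathcal{Q}_{label}$ and any landmark child at depth $k$ to $\mathcal{Q}_{prune}$), and then, after $n \leftarrow n+1$, processes $\mathcal{Q}_{prune}$ at depth $k$ (adding all its unvisited children at depth $k+1$ to $\mathcal{Q}_{prune}$). The crucial scheduling fact is that for every $k \geq 1$, $\mathcal{Q}_{prune}$ at depth $k$ is fully drained at the end of iteration $k$ \emph{before} $\mathcal{Q}_{label}$ at depth $k$ is processed at the start of iteration $k+1$. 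Consequently, if a non-landmark vertex $v$ at depth $d$ has any BFS parent sitting in $\mathcal{Q}_{prune}$ at depth $d-1$, then $v$ is already marked visited and enqueued in $\mathcal{Q}_{prune}$ before the labelling pass ever examines it.

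Building on this, I would establish a sub-claim: for $v \in V \setminus R$ at depth $d \geq 1$, $v$ is placed in $\mathcal{Q}_{label}$ iff \emph{every} BFS parent of $v$ lies in $\mathcal{Q}_{label}$. The ``only if'' direction is the scheduling observation above; the ``if'' direction is almost as easy, since with no pruned parent $v$ is still unvisited at the start of iteration $d$ step 1, at which point a labelled parent picks up its non-landmark child $v$ and labels it. The induction hypothesis then says that each parent $u$ at depth $d-1$ lies in $\mathcal{Q}_{label}$ iff $u \in V \setminus R$ and every shortest path from $r_i$ to $u$ has $r_i$ as its only landmark. Since every shortest path from $r_i$ to $v$ decomposes as a shortest path from $r_i$ to some BFS parent $u$ followed by the edge $(u,v)$, and $v$ is itself non-landmark, the conjunction of the per-parent conditions is exactly ``$P_{r_iv}\cap R = \{r_i\}$ for every shortest path from $r_i$ to $v$'', which closes the induction. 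The base case $d=1$ is immediate: $v$ is a non-landmark neighbour of $r_i$, is labelled in iteration~1, and its unique shortest path $\langle r_i, v\rangle$ contains no other landmark.

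The main obstacle is pinning down the correct reading of the lemma's statement: the condition ``$P_{r_iv}\cap R = \{r_i\}$'' has to be interpreted as ``\emph{every} shortest path from $r_i$ to $v$ has $r_i$ as its only landmark'', not as ``some shortest path does''. This matters because Algorithm~\ref{algo:wholePreprocessingAlgo} prunes $v$ as soon as any one shortest path from $r_i$ to $v$ passes through another landmark, even when a parallel landmark-free shortest path also exists, as the vertex-$7$ from landmark-$1$ example already discussed in the paper illustrates. The scheduling argument that $\mathcal{Q}_{prune}$ at depth $k$ is fully drained before $\mathcal{Q}_{label}$ at the same depth is processed is precisely what justifies this universal quantification, and is the one step of the proof where standard BFS intuition is insufficient.
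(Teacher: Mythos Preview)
Your proof is correct and follows the same basic idea as the paper --- both trace the algorithm's BFS behaviour --- but yours is considerably more rigorous. The paper's proof is a one-paragraph walk through the algorithm lines: it observes that if $v$ is an unvisited child of some $u\in\mathcal{Q}_{label}$ and $v\notin R$ then $P_{r_iv}\cap R=\{r_i\}$ and $v$ is labelled, and that if $v\in R$ then its descendants are pruned. It does not make explicit the scheduling fact you isolate (that $\mathcal{Q}_{prune}$ at depth $k$ is drained before $\mathcal{Q}_{label}$ at depth $k$ is processed), nor does it spell out why the conclusion must be read as a universal quantifier over shortest paths.

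Your additions --- the explicit induction on depth, the clean sub-claim ``$v\in\mathcal{Q}_{label}$ iff every BFS parent of $v$ is in $\mathcal{Q}_{label}$'', and the scheduling argument that justifies it --- are exactly what is needed to make the paper's sketch into a proof. In particular, your remark that a vertex with one landmark-free and one landmark-containing shortest path to $r_i$ is still pruned (because the prune queue at the parent's depth runs first) is the point where naive BFS intuition would go wrong, and the paper's proof is silent there. One small addition you might state explicitly for completeness: when invoking the induction hypothesis on a parent $u$ at depth $d-1$, the case $u\in R$ (so $u\notin V\backslash R$) needs a one-line observation that such $u$ is always placed in $\mathcal{Q}_{prune}$ by line~11, so the biconditional you state for parents holds trivially for landmark parents as well.
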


\begin{proof}

Suppose that Algorithm \ref{algo:wholePreprocessingAlgo} is conducting a pruned BFS rooted at $r_i$ and $v$ is an unvisited child of another vertex $u$ in $\mathcal{Q}_{label}$ (start from $\mathcal{Q}_{label}=\{r_i\}$) (Lines 6-9). If $v\in R$ (Line 10), then we have $(P_{r_iw}\cap R)\supseteq\{r_i,v\}$ (Lines 11, 19-21), $(r_i, \delta_L(r_i, w))$ cannot be added into the label of any child $w$ of $v$, i.e., put $w$ into $\mathcal{Q}_{prune}$. Otherwise, by $v\notin R$ and $v$ is an unvisited child of a vertex $u$ in $\mathcal{Q}_{label}$ (Lines 8-9), we know that $P_{r_iv}\cap R=\{r_i\}$ and thus $(r_i, \delta_L(r_i, v))$ is added into $L(v)$ (lines 12-14). 
\end{proof}

Then, by Lemma \ref{lemma:p1}, we have the following corollary.
\begin{corollary}\label{lemma:p2}
Let $r \in R$ be a landmark, $v \in V\backslash R$ a vertex, and $L$ a distance labelling constructed by Algorithm \ref{algo:wholePreprocessingAlgo}, if $(r, \delta_L(r, v)) \notin L(v)$, then there must exist a landmark $r_j$ such that $(r_j, \delta_L(r_j, v)) \in L(v)$ and $d_G(r, v) = \delta_L(r_j, v) + \delta_H(r, r_j)$.
\end{corollary}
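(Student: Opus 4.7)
The plan is to derive the corollary from Lemma \ref{lemma:p1} by making a careful choice of $r_j$ and then invoking the highway property together with the correctness of BFS. Since $(r, \delta_L(r, v)) \notin L(v)$, the ``only if'' direction of Lemma \ref{lemma:p1} tells me that the shortest path from $r$ to $v$ must contain at least one landmark other than $r$. My plan is to fix any shortest $r$-to-$v$ path $\pi$ and let $r_j$ be the landmark on $\pi$ that is closest to $v$; the hypothesis forces $r_j \neq r$.

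Next I would argue that the subpath $\pi'$ of $\pi$ from $r_j$ to $v$ contains no landmark other than $r_j$ itself: any other landmark on $\pi'$ would be strictly closer to $v$ than $r_j$, contradicting the minimality used to pick $r_j$. Since every subpath of a shortest path is itself a shortest path, $\pi'$ realises $d_G(r_j, v)$ and so $P_{r_j v}\cap R=\{r_j\}$. Applying the ``if'' direction of Lemma \ref{lemma:p1} to the pruned BFS rooted at $r_j$, the entry $(r_j, \delta_L(r_j, v))$ is therefore inserted into $L(v)$, and $\delta_L(r_j, v)=d_G(r_j, v)$ by correctness of BFS distances on unweighted graphs.

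Finally, because $r_j$ lies on a shortest $r$-to-$v$ path, the triangle inequality in Section \ref{section:preliminaries} becomes an equality, giving $d_G(r, v) = d_G(r, r_j) + d_G(r_j, v)$. Since both $r$ and $r_j$ are landmarks, the definition of a highway yields $\delta_H(r, r_j) = d_G(r, r_j)$, and substitution gives the claimed identity $d_G(r, v) = \delta_L(r_j, v) + \delta_H(r, r_j)$.

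The only genuinely delicate point is the selection of $r_j$ as the landmark on $\pi$ closest to $v$ and the verification that the resulting subpath is landmark-free apart from $r_j$; once that is in place, the rest is routine bookkeeping with the highway definition, BFS correctness, and the characterisation already established in Lemma \ref{lemma:p1}. One subtlety to handle cleanly is the possible non-uniqueness of shortest paths from $r$ to $v$: the argument works for \emph{any} fixed shortest path $\pi$, so no uniqueness assumption is needed, but I would state the choice of $\pi$ explicitly at the start of the proof to avoid ambiguity.
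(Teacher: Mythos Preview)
Your proposal is correct and follows essentially the same approach as the paper. The paper's proof (which is actually commented out in the source and thus omitted from the final version) also selects a landmark $r_j$ on the shortest $r$--$v$ path with $P_{r_jv}\cap R=\{r_j\}$ and then reapplies Lemma~\ref{lemma:p1}; your version is simply more explicit in naming $r_j$ as the landmark closest to $v$ along a fixed shortest path and in spelling out the distance identity via the highway definition and BFS correctness.
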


\begin{theorem}
The highway cover distance labelling $L$ over $(G,H)$ constructed using Algorithm \ref{algo:wholePreprocessingAlgo} satisfies the highway cover property over $(G,H)$.
\end{theorem}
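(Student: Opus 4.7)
The plan is to verify the highway cover property of Definition~\ref{def:highway-cover} by fixing arbitrary $s, t \in V\setminus R$ and $r \in R$ and producing suitable distance entries in $L(s)$ and $L(t)$. Since the property is symmetric in $s$ and $t$, it suffices to exhibit a landmark $r_i$ with $(r_i, \delta_L(r_i, s)) \in L(s)$ and $r_i \in P_{rs}$; the identical reasoning applied to $t$ will produce the required $r_j$.

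I would then split into two cases according to whether $r$ itself labels $s$. If $(r, \delta_L(r, s)) \in L(s)$, the choice $r_i = r$ works immediately, since $r$ is an endpoint of every shortest path from $r$ to $s$ and so $r \in P_{rs}$. Otherwise $(r, \delta_L(r, s)) \notin L(s)$, and I would invoke Corollary~\ref{lemma:p2}: it yields a landmark $r_j$ with $(r_j, \delta_L(r_j, s)) \in L(s)$ and
\[
d_G(r, s) \;=\; \delta_L(r_j, s) + \delta_H(r, r_j) \;=\; d_G(r_j, s) + d_G(r, r_j).
\]
The equality case of the triangle inequality in~(\ref{distance_metric}) forces $r_j$ to lie on some shortest path between $r$ and $s$, that is $r_j \in P_{rs}$ for an appropriate choice of shortest path, so setting $r_i = r_j$ closes the second case.

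The main subtlety worth flagging is the treatment of possibly non-unique shortest paths: the highway cover property demands only that $r_i$ lie on \emph{some} shortest path from $r$ to $s$, which is exactly the geometric content of the additive identity furnished by Corollary~\ref{lemma:p2}. The real work of the proof lives upstream, in Lemma~\ref{lemma:p1} and Corollary~\ref{lemma:p2}, which translate the BFS pruning dynamics of Algorithm~\ref{algo:wholePreprocessingAlgo} into a clean algebraic statement about labels; once those are in hand, the highway cover property drops out of the short case analysis above with essentially no further work, and I do not anticipate any substantive obstacle beyond being careful not to conflate ``the'' with ``a'' shortest path when invoking the earlier results.
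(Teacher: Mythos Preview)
Your proposal is correct and follows essentially the same approach as the paper: both invoke Corollary~\ref{lemma:p2} in the case where $r$ does not label a vertex and take $r_i=r$ otherwise. The only difference is cosmetic---the paper spells out the four cases arising from the pair $(s,t)$, whereas you exploit the symmetry to handle $s$ and $t$ independently with a two-case split, and you make explicit the passage from the additive identity to membership in some shortest path, which the paper leaves implicit.
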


\begin{proof}
To prove that, for any two vertices $s, t \in V\backslash R$ and for any $r\in R$, there exist $(r_i, \delta_L(r_i, s)) \in L(s)$ and $(r_j, \delta_L(r_j, t)) \in L(t)$ such that $r_i \in P_{rs}$ and $r_j \in P_{rt}$, we consider the following 4 cases: (1) If $r \in L(s)$ and $r \in L(t)$, then $r = r_i= r_j$. 
(2) If $r \in L(s)$ and $r \notin L(t)$, then $r_i=r$ and by Lemma \ref{lemma:p2}, there exists another landmark $r_j$ such that $r_j$ is in the shortest path between $t$ and $r$ and $(r_j, \delta_L(r_j, t)) \in L(t)$. (3) If $r \notin L(s)$ and $r \in L(t)$, then similarly we have $r_j=r$, and by Lemma \ref{lemma:p2}, there exists another landmark $r_i$ such that $r_i$ is in the shortest path between $s$ and $r$ and $(r_i, \delta_L(r_i, s)) \in L(s)$.
(4) If $r \notin L(s)$ and $r \notin L(t)$, then by Lemma \ref{lemma:p2} there exist another two landmarks $r_i$ and $r_j$ such that $r_i$ is in the shortest path between $s$ and $r$ and $(r_i, \delta_L(r_i, s)) \in L(s)$, and $r_j$ is in the shortest path between $t$ and $r$ and $(r_j, \delta_L(r_j, t)) \in L(t)$. The proof is done.
\end{proof}


\begin{figure*}[ht]
\centering
\begin{minipage}[b]{0.50\linewidth}
\centering
\includegraphics[width=\textwidth]{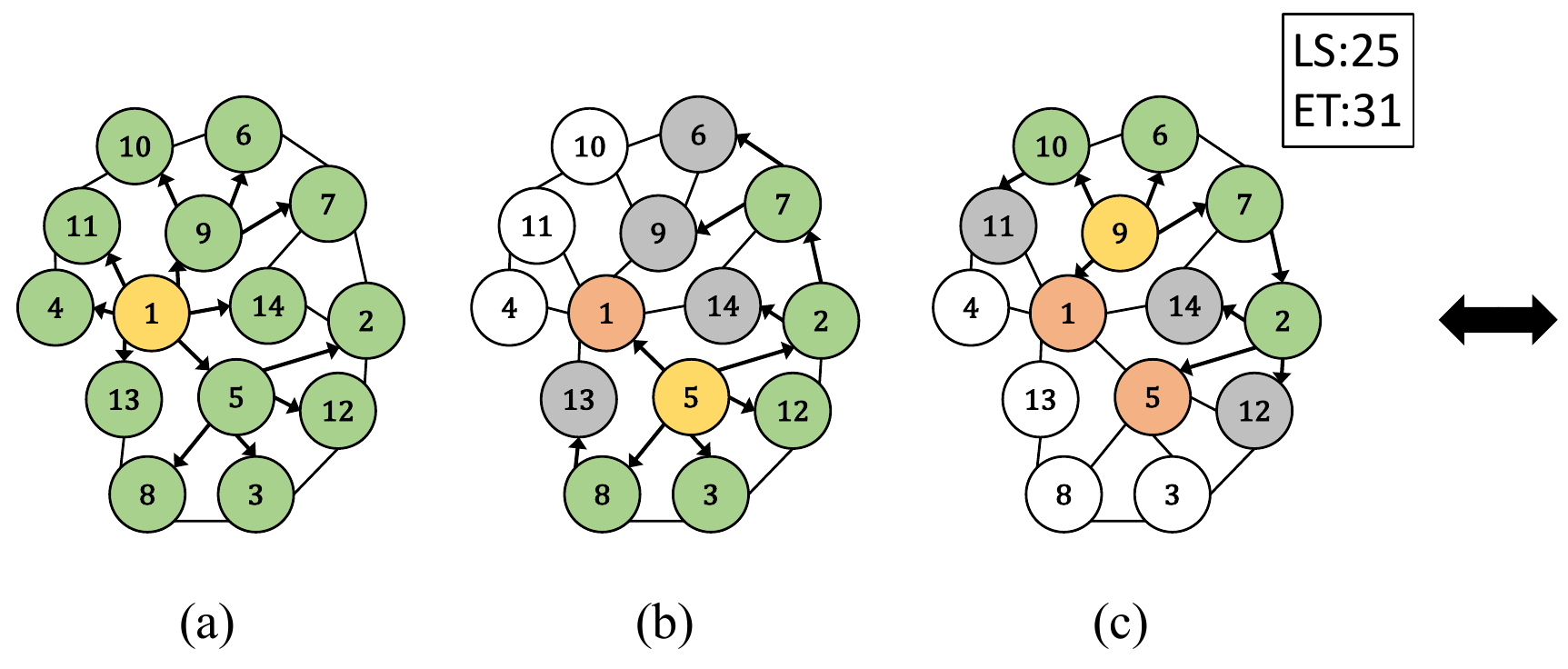}
\end{minipage}
\hspace{0.2cm}
\begin{minipage}[b]{0.46\linewidth}
\centering
\includegraphics[width=\textwidth]{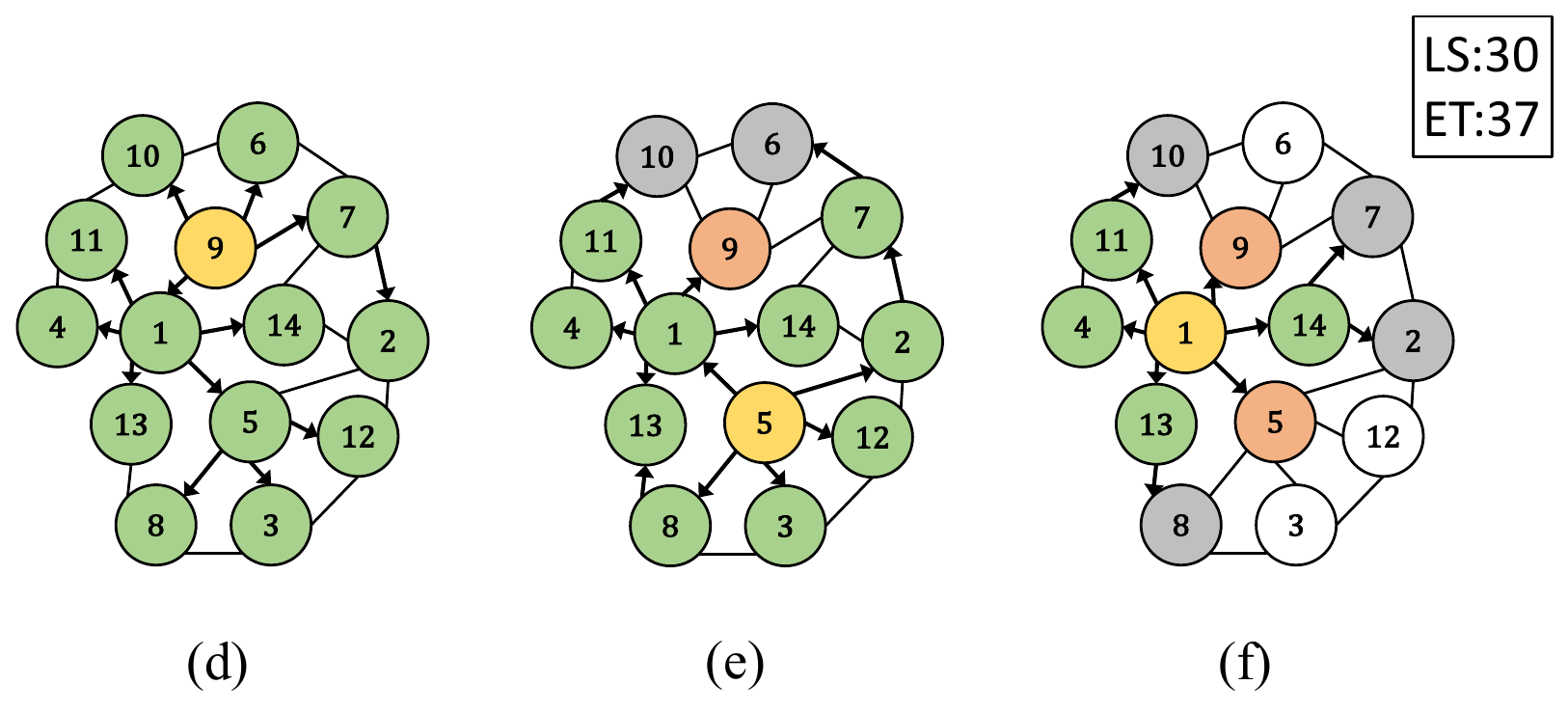}
\end{minipage}
\vspace{-0.3cm}
\caption{An illustration of the pruned landmark labelling algorithm \cite{akiba2013fast}: (a)-(c) show an example of constructing labels through pruned BFSs from three landmarks in the labelling order $\langle 1,5,9 \rangle$; (d)-(f) show an example of constructing labels using the same three landmarks but in a different labelling order $\langle 9, 5, 1\rangle$. Yellow vertices denote landmarks that are the roots of pruned BFSs, green vertices denote those being labeled, grey vertices denote vertices being visited but pruned, and red vertices denote landmarks which have already been visited.}
\label{fig:pruned-landmark-labelling-algoritm}
\end{figure*}

\subsection{Order Independence}
In previous studies \cite{abraham2011hub,akiba2013fast,abraham2012hierarchical,cohen2003reachability}, given a graph $G$, a distance labelling algorithm builds a unique canonical distance labelling subject to a labelling order (i.e., the order of landmarks used for constructing a distance labelling). It has been well known that such a labelling order is decisive in determining the size of the constructed distance labelling \cite{qin2017efficient}. For the same set of landmarks, when using different labelling orders, the sizes of the constructed distance labelling may vary significantly. 

The following example shows how different labelling orders in the pruned landmark labelling approach \cite{akiba2013fast} can lead to the distance labelling of different sizes.

\begin{example}
In Figure \ref{fig:pruned-landmark-labelling-algoritm}, the size of the distance labelling constructed using the labelling order $\langle 1,5,9 \rangle$ in Figure \ref{fig:pruned-landmark-labelling-algoritm}(a)-\ref{fig:pruned-landmark-labelling-algoritm}(c) is different from the size of the distance labelling constructed using the labelling order $\langle 9,5,1 \rangle$ in Figure \ref{fig:pruned-landmark-labelling-algoritm}(d)-\ref{fig:pruned-landmark-labelling-algoritm}(f). In both cases, the first BFS adds the distance entry of the current landmark into all the vertices in the graph. Then, the following BFSs check each visited vertex whether the shortest path distance between the current landmark and the visited vertex can be computed via the 2-hop cover property based on their labels added by the previous BFSs. A distance entry is only added into the label of a vertex if the shortest path distance cannot be computed by applying the 2-hop cover over the existing labels. Thus, the choice of the labelling order could affect the size of labels significantly. Take the vertex $11$ for example, its label contains only one distance entry $(1, 1)$ using the labelling order depicted in Figure \ref{fig:pruned-landmark-labelling-algoritm}(a)-\ref{fig:pruned-landmark-labelling-algoritm}(c), but contains three distance entries $(1.1)$, $(5,2)$, and $(9,2)$ when the labelling order depicted in Figure \ref{fig:pruned-landmark-labelling-algoritm}(d)-\ref{fig:pruned-landmark-labelling-algoritm}(f) is used.
\end{example}

Unlike all previous approaches taken with distance labelling, our highway cover labelling algorithm is order-invariant. That is, regardless of the labelling order, the distance labellings constructed by our algorithm using different labelling orders over the same set of landmarks always have the same size. In fact, we can show that our algorithm has the following stronger property: the distance labelling constructed using our algorithm is deterministic (i.e., the same label for each vertex) for a given set of landmarks.

\begin{lemma}
Let $G=(V,E)$ be a graph and $H=(R,\delta_H)$ a highway over $G$. For any two different labelling orders over $R$, the highway cover distance labellings $L_1$ and $L_2$ over $(G,H)$ constructed by these two different labelling orders using Algorithm \ref{algo:wholePreprocessingAlgo} satisfy $L_1(v)=L_2(v)$ for every $v\in V \backslash R$.
\end{lemma}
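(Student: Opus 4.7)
The plan is to show that for any landmark $r_i\in R$ and any vertex $v\in V\backslash R$, whether the distance entry $(r_i, d_G(r_i,v))$ appears in the label of $v$ is determined by a condition depending only on $G$ and $R$, not on the order in which the landmarks are enumerated. This will immediately give $L_1(v)=L_2(v)$ for every $v\in V\backslash R$.

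First I would observe that, unlike the pruned landmark labelling of Akiba et al., the BFS from $r_i$ in Algorithm \ref{algo:wholePreprocessingAlgo} never consults labels produced by earlier BFSs: the only pruning test (line 10) asks whether a discovered vertex belongs to the landmark set $R$. Consequently the execution of the BFS rooted at $r_i$, and in particular the set of distance entries it deposits into $\{L(v)\}_{v\in V\backslash R}$, is the same no matter which landmarks were processed before it. Thus the total labelling is simply the union of $|R|$ independent pruned-BFS outputs, and union is commutative.

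Next I would invoke Lemma \ref{lemma:p1}, which characterizes exactly when $(r_i,\delta_L(r_i,v))$ is added to $L(v)$, namely iff $P_{r_iv}\cap R=\{r_i\}$. This predicate is a purely graph-theoretic property of the quadruple $(G,R,r_i,v)$; it makes no reference to a BFS order, to any labelling order, or to previously constructed labels. Combining the two observations, for every $v\in V\backslash R$ the label produced by Algorithm \ref{algo:wholePreprocessingAlgo} is
\begin{equation*}
L(v)\;=\;\bigl\{(r_i,\,d_G(r_i,v))\;\bigm|\;r_i\in R,\;P_{r_iv}\cap R=\{r_i\}\bigr\},
\end{equation*}
which is a function of $(G,R)$ alone. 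Hence $L_1(v)=L_2(v)$, and the labelling is in fact deterministic, not merely of the same size.

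The one delicate point I expect to address is the case where $v$ admits several shortest paths from $r_i$, some landmark-free and some passing through other landmarks. Here I would verify that the algorithm's depth-synchronous policy of draining $\mathcal{Q}_{label}$ before $\mathcal{Q}_{prune}$ at each BFS depth guarantees $v$ is discovered first through a landmark-free parent whenever one exists, which is precisely the reading of Lemma \ref{lemma:p1} that makes the characterization above tight. Once this is spelled out, the proof reduces to a one-line invocation of the lemma applied landmark-by-landmark.
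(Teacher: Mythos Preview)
Your proposal is correct and follows essentially the same argument as the paper: both observe that the pruned BFS rooted at each $r_i$ never consults previously constructed labels (it only tests membership in $R$), so each BFS produces identical output regardless of order, and the labelling is the order-independent union of these outputs. Your write-up is in fact more thorough than the paper's---you make the characterization via Lemma~\ref{lemma:p1} explicit and flag the multiple-shortest-path subtlety---but the underlying strategy is the same.
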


\begin{proof}
Let $O_{L_1}$ and $O_{L_2}$ be two different labelling orders over $R$. For any landmark $r$ in $O_{L_1}$ and $O_{L_2}$, Algorithm \ref{algo:wholePreprocessingAlgo} generates exactly the same pruned BFS tree. This implies that, for each vertex $v\in V \backslash R$, either the same distance entry $(r, \delta_{BFS}(r, v)$) is added into $L_1(v)$ and $L_2(v)$, or no distance entry is added to $L_1(v)$ and $L_2(v)$. Thus, Algorithm \ref{algo:wholePreprocessingAlgo} satisfy $L_1(v)=L_2(v)$ for every $v\in V \backslash R$. 
\end{proof}

\subsection{Minimality}
Here we discuss the question of minimality, i.e., whether the highway cover distance labelling constructed by our algorithm is always minimal in terms of the labelling size. We first prove the following theorem.

\begin{theorem}\label{the:minimal}
The highway cover distance labelling $L$ over $(G,H)$ constructed using Algorithm \ref{algo:wholePreprocessingAlgo} is minimal, i.e., for any highway cover distance labelling $L'$ over $(G,H)$, $size(L')\geq size(L)$ must hold. 
\end{theorem}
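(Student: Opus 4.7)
The plan is to establish the pointwise inclusion $L(v)\subseteq L'(v)$ for every $v\in V\backslash R$, since summing over $v$ then yields $size(L)\leq size(L')$ directly.

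First, fix any entry $(r^*,d_G(r^*,v))\in L(v)$ inserted by Algorithm \ref{algo:wholePreprocessingAlgo}, and invoke Lemma \ref{lemma:p1} to conclude that no landmark other than $r^*$ can lie on any shortest path between $r^*$ and $v$. In particular, on every shortest $r^*{\to}v$ path the only element of $R$ is $r^*$ itself.

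Next, invoke the highway cover property enjoyed by $L'$. Applying Definition \ref{def:highway-cover} with $s=v$, $t=v$ (or, if the definition is read as requiring distinct endpoints, with $t$ any other non-landmark vertex --- the degenerate case $|V\backslash R|\leq 1$ makes $L$ empty and the theorem vacuous) and $r=r^*$ produces a landmark $r_i\in R$ such that $(r_i,\delta_{L'}(r_i,v))\in L'(v)$ and $r_i$ lies on some shortest path from $r^*$ to $v$. The first step then forces $r_i=r^*$, so $(r^*,d_G(r^*,v))\in L'(v)$. Doing this for every entry of $L(v)$ gives $L(v)\subseteq L'(v)$, and summing closes the argument.

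The step I expect to require the most care is pinning down the correct reading of Lemma \ref{lemma:p1} for this argument. What is needed is the \emph{universal} form: $(r^*,\cdot)\in L(v)$ implies that \emph{every} shortest $r^*{\to}v$ path is free of landmarks other than $r^*$, not merely that some one path is. Fortunately this is exactly what Algorithm \ref{algo:wholePreprocessingAlgo} delivers, because the prune-first scheduling inside its while loop (prune-at-depth-$(n{+}1)$ fires before label-at-depth-$(n{+}1)$) guarantees that the instant any shortest path to $v$ passes through another landmark, $v$ is routed into $\mathcal{Q}_{prune}$ and the entry $(r^*,\cdot)$ is never added. Once this strengthening of Lemma \ref{lemma:p1} is made explicit, the highway cover property seals the argument in a single line.
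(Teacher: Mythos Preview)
Your argument is correct and is essentially the paper's own proof, recast as a direct inclusion $L(v)\subseteq L'(v)$ rather than a contradiction; the paper assumes $size(L')<size(L)$, extracts a pair $(r,v)$ with $r\in L(v)\setminus L'(v)$, and derives the same contradiction between Lemma~\ref{lemma:p1} and Definition~\ref{def:highway-cover} that you use constructively. Your discussion of the ``every shortest path'' reading of Lemma~\ref{lemma:p1} is a genuine refinement --- the paper's $P_{r_iv}$ notation leaves this implicit --- and your observation about the prune-queue being processed one level ahead of the label-queue is exactly what justifies it; the paper simply takes this for granted.
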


\begin{proof}
We prove this by contradiction. Let us assume that there is a highway cover distance labelling $L'$ with $size(L') < size(L)$. Then, this would imply that there must exist a vertex $v \in V\backslash R$ and a landmark $r \in R$ such that $r \in L(v)$ and $r \notin L'(v)$. By Lemma \ref{lemma:p1} and $r \in L(v)$, we know that there is no any other landmark in $R$ that is in the shortest path between $r$ and $v$. However, by the definition of the highway cover property (i.e. Definition \ref{def:highway-cover}) and $r \notin L'(v)$, we also know that there must exist another landmark $(r_i, \delta_L(r_i, v)) \in L(v)$ and $r_i \in P_{rv}$, which contradicts with the previous conclusion that there is no any other landmark in the shortest path between $r$ and $v$. Thus, $size(L') \geq size(L)$ must hold for any highway cover distance labelling $L'$.
\end{proof}

The state-of-the-art approaches for distance labelling is primarily based on the idea of 2-hop cover \cite{akiba2013fast, fu2013label, abraham2011hub}. One may ask the question: how is the highway cover labelling different from the 2-hop cover labelling, such as the pruned landmark labelling \cite{akiba2013fast}? It is easy to verify the following lemma that each pruned landmark labelling satisfies the highway cover property for the same set of landmarks. 

\begin{lemma}
Let $L$ be a pruned landmark labelling over graph $G$ constructed using a set of landmarks $R$. Then $L$ also satisfies the highway cover property over $(G,H)$ where $H=(R, \delta_H)$.
\end{lemma}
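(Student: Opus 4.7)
The plan is to reduce the highway cover property directly to the 2-hop cover property that a pruned landmark labelling, by construction, is guaranteed to satisfy. The key observation is that the set of landmarks $R$ in the highway cover definition is the same set of vertices from which pruned BFSs are rooted in the pruned landmark labelling algorithm, so the 2-hop decoding applied to a pair $(r,v)$ with $r\in R$ yields a witness hub that is automatically both a landmark and a vertex on a shortest path.

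First, I would fix arbitrary $s,t\in V\backslash R$ and $r\in R$, and focus on showing the existence of the required $r_i$ for $s$; the argument for $r_j$ and $t$ is symmetric. Since the pruned landmark labelling $L$ is a 2-hop cover distance labelling over $G$, we have $Q(r,s,L)=d_G(r,s)$. Expanding the definition of $Q$, there must exist some $r_i\in R$ with $(r_i,\delta_L(r_i,r))\in L(r)$ and $(r_i,\delta_L(r_i,s))\in L(s)$ satisfying $\delta_L(r_i,r)+\delta_L(r_i,s)=d_G(r,s)$.

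Next, I would use the fact that every distance entry stored in a label is an exact graph distance (this is an invariant of pruned landmark labelling: labels only contain pairs $(u,d_G(u,\cdot))$). Consequently $\delta_L(r_i,r)=d_G(r_i,r)$ and $\delta_L(r_i,s)=d_G(r_i,s)$, so $d_G(r,r_i)+d_G(r_i,s)=d_G(r,s)$. By definition of shortest paths, this equality is exactly the condition that $r_i$ lies on some shortest path between $r$ and $s$, i.e.\ $r_i\in P_{rs}$. Combined with $(r_i,\delta_L(r_i,s))\in L(s)$, this is precisely what the highway cover property asks for on the $s$-side. Repeating the argument with $t$ in place of $s$ produces the required $r_j\in P_{rt}$ with $(r_j,\delta_L(r_j,t))\in L(t)$, completing the verification of Definition \ref{def:highway-cover}.

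There is no substantive obstacle here; the only mildly subtle point worth making explicit is the distinction between $P_{rs}$ as a single chosen shortest path versus the family of shortest paths between $r$ and $s$. Since the highway cover definition uses an existential quantifier on $r_i$, it suffices that $r_i$ lie on \emph{some} shortest path, which is exactly what the equality $d_G(r,r_i)+d_G(r_i,s)=d_G(r,s)$ guarantees. Note also that the special cases where $r=r_i$ or $r_i=s$ cause no difficulty: if $r\in L(s)$ we simply take $r_i=r$, and the case $r_i=s$ cannot arise because $s\notin R$.
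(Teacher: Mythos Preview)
The paper does not supply a proof of this lemma; it merely remarks beforehand that it is ``easy to verify.'' Your argument is the natural one and is essentially correct: leverage the guarantee that PLL answers $Q(r,s,L)=d_G(r,s)$ exactly whenever $r\in R$, extract a hub $r_i$ realising this minimum, and observe that $d_G(r,r_i)+d_G(r_i,s)=d_G(r,s)$ forces $r_i$ to lie on a shortest $r$--$s$ path.

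One clarification is worth making. You assert that ``the pruned landmark labelling $L$ is a 2-hop cover distance labelling over $G$,'' but with a proper subset $R\subsetneq V$ of landmarks this is in general false: for two non-landmark vertices $u,v$ there is no reason that $Q(u,v,L)=d_G(u,v)$. What PLL does guarantee---and what your proof actually uses---is the weaker but sufficient statement that $Q(r,v,L)=d_G(r,v)$ for every landmark $r\in R$ and every vertex $v$. This is the standard PLL correctness invariant from \cite{akiba2013fast}, provable by induction on the landmark order: either $(r,d_G(r,v))$ is added to $L(v)$ during the BFS from $r$, or $v$ is pruned precisely because the partially built labels already recover $d_G(r,v)$ via some earlier landmark. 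With that adjustment in the justification, your proof goes through unchanged, and your handling of the $P_{rs}$ ambiguity is appropriate.
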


As the pruned landmark labelling algorithm \cite{akiba2013fast} prunes labels based on the 2-hop cover property, but our highway cover labeling algorithm prunes labels based on the property described in Lemma \ref{lemma:p1}, by Theorem \ref{the:minimal}, we also have the following corollary, stating that, for the same set of landmarks, the size of the highway cover labelling is always smaller than the size of any pruned landmark labelling.

\begin{corollary}
For a highway cover distance labelling $L_1$ produced by Algorithm \ref{algo:wholePreprocessingAlgo} over $(G,H)$, where $H=(R,\delta_H)$, and a pruned landmark labelling $L_2$ over $G$ using any labelling order over $R$, we always have $|L_1| \leq |L_2|$.
\end{corollary}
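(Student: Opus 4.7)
The plan is to obtain the corollary as an immediate composition of the two preceding results: the preceding lemma, which tells us that every pruned landmark labelling over $G$ with landmark set $R$ already satisfies the highway cover property over $(G,H)$ with $H=(R,\delta_H)$, and Theorem \ref{the:minimal}, which says that the labelling $L_1$ produced by Algorithm \ref{algo:wholePreprocessingAlgo} is minimal in size among all highway cover distance labellings over $(G,H)$.

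More concretely, first I would fix $G$, $R$, the highway $H=(R,\delta_H)$ (whose distance decoding function is determined by $G$ and $R$), and an arbitrary labelling order $\pi$ over $R$. Let $L_2$ denote the pruned landmark labelling produced from $\pi$. By the preceding lemma, $L_2$ is a highway cover distance labelling over $(G,H)$; in other words, $L_2$ lies in the same class of labellings over which Theorem \ref{the:minimal} quantifies.

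Next, I would apply Theorem \ref{the:minimal} with $L'=L_2$, which yields $\mathrm{size}(L_2)\geq \mathrm{size}(L_1)$. Since the size notation $|L|$ used in the corollary is defined as $\mathrm{size}(L)=\sum_{v\in V}|L(v)|$, this is exactly $|L_1|\leq |L_2|$. As $\pi$ was arbitrary, the inequality holds for any labelling order, completing the proof.

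There is essentially no genuine obstacle in this argument, so the only thing to be careful about is keeping the two quantifications straight: Theorem \ref{the:minimal} is over \emph{all} highway cover distance labellings on the same pair $(G,H)$, and the preceding lemma is precisely what places a pruned landmark labelling into that set. I would also briefly remark that $L_1$ itself is order-invariant (by the earlier order-independence lemma), so the corollary effectively says that a single canonical labelling produced by Algorithm \ref{algo:wholePreprocessingAlgo} dominates every pruned landmark labelling obtainable by any ordering of $R$.
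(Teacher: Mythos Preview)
Your proposal is correct and matches the paper's own reasoning: the paper states the corollary without a separate proof, presenting it as following directly from Theorem \ref{the:minimal} together with the preceding lemma that pruned landmark labellings satisfy the highway cover property. Your write-up makes this two-step composition explicit and is exactly the intended argument.
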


\begin{example}
Figure \ref{fig:pruned-landmark-labelling-algoritm} shows the labelling size (LS) of the pruned landmark labelling at the top right corner, which is constructed using two different orderings. The first ordering $\langle 1,5,9 \rangle$ labels 25 vertices whereas the second ordering $\langle 9,5,1 \rangle$ labels 30 vertices. On the other hand, the LS of the highway cover distance labelling is 13 as shown in Figure \ref{fig:highway-cover-algoritm}. Note that the LS of the highway cover distance labelling does not change, irrespective of ordering. Since the highway cover distance labelling constructed by our algorithm is always minimal, the LS of the highway cover distance labelling in Figure \ref{fig:highway-cover-algoritm} is much smaller than the LS of either pruned landmark labelling in Figure \ref{fig:pruned-landmark-labelling-algoritm}.
\end{example}

\section{Bounded Distance Querying}\label{section:querying}
In this section, we describe a bounded distance querying framework that allows us to efficiently compute exact shortest-path distances between two arbitrary vertices in a massive network. 

\subsection{Querying Framework}

We start with presenting a high-level overview of our querying framework. To compute the shortest path distance between two vertices $s$ and $t$ in graph $G$, our querying framework proceeds in two steps: (1) an upper bound of the shortest path distance between $s$ to $t$ is computed using the highway cover distance labelling; (2) the exact shortest path distance between $s$ to $t$ is computed using a distance-bounded shortest-path search over a sparsified graph from $G$. 


Given a graph $G$ and a highway $H=(R, \delta_H)$ over $G$, we can precompute a highway cover distance labelling $L$ using the landmarks in $R$, which enables us to efficiently compute the length of any $r$-constrained shortest path between two vertices in $V \backslash R$. The length of such a $r$-constrained shortest path must be greater than or equal to the exact shortest path distance between these two vertices and can thus serve as an upper bound in Step (1). On the other hand, since the length of such a $r$-constrained shortest path between two vertices in $V \backslash R$ can always be efficiently computed by the highway cover distance labelling $L$, the distance-bounded shortest-path search only needs to be conducted over a sparsified graph $G'$ by removing all landmarks in $R$ from $G$, i.e. $G'=G[V \backslash R]$.

More precisely, we define the bounded distance querying problem in the following.

\begin{definition}(\emph{Bounded Distance Querying Problem})
Given a sparsified graph $G'=(V',E')$, a pair of vertices $\{s,t\} \in V'$, and an upper (distance) bound $d^{\top}_{st}$, the \emph{bounded distance querying problem} is to efficiently compute the shortest path distance $d_{st}$ between $s$ and $t$ over $G'$ under the upper bound $d^{\top}_{st}$ such that,

\[
    d_{st}= 
\begin{cases}
    d_{G'}(s,t),& \text{if } d_{G'}(s,t)\leq d^{\top}_{st}\\
    d^{\top}_{st},              & \text{otherwise}
\end{cases}
\]

\end{definition}

In the following, we discuss the two steps of this framework in detail.

\subsection{Computing Upper Bounds}
Given any two vertices $s$ and $t$, we can use a \emph{highway cover distance labelling} $L$ to compute an upper bound $d^{\top}_{st}$ for the shortest path distance between $s$ and $t$ as follows,

\begin{align}\label{eq:upper-distance-bound}
d^{\top}_{st} = \texttt{min}\{\delta_L(r_i, s) + \delta_H(r_i, r_j) + \delta_L(r_j, t) |\notag\\(r_i, \delta_L(r_i, s)) \in  L(s),\notag\\ (r_j, \delta_L(r_j, t)) \in  L(t)\} 
\end{align}

This corresponds to the length of a shortest path from $s$ to $t$ passing through landmarks $r_i$ and $r_j$, where $\delta_L(r_i, s)$ is the shortest path distance from $r_i$ to $s$ in $L(s)$, $\delta_H(r_i, r_j)$ is the shortest path distance from $r_i$ to $r_j$ through highway $H$, and $\delta_L(r_j, t)$ is the shortest path distance from $r_j$ to $t$ in $L(t)$.

\begin{example}\label{exa:upperbound}
Consider the graph in Figure \ref{fig:highway-cover-labelling}(a), we may use the labels $L(2)$ and $L(11)$ to compute the upper bound for the shortest path distance between two vertices $2$ and $11$. There are two cases: (1) for the path $\langle 2, 5, 1, 11 \rangle$ that goes through landmarks 5 and 1, we have $\delta_L(5, 2) + \delta_H(5, 1) + \delta_L(1, 11) = 1 + 1 + 1 = 3$, and (2) for the path $\langle 2, 9, 1, 11 \rangle$ that goes through landmarks 9 and 1, we have $\delta_L(9, 2) + \delta_H(9, 1) + \delta_L(1, 11) = 2 + 1 + 1 = 4$. Thus, we take the minimum of these two distances as the upper bound, which is 3 in this case.
\end{example}

\subsection{Distance-Bounded Shortest Path Search}

We conduct a bidirectional search on the sparsified graph $G[V \backslash R]$ which is bounded by the upper bound $d^{\top}_{st}$ from the highway cover distance labelling. For a pair of vertices $\{s,t\} \subseteq V \backslash R$, we run breadth-first search algorithm from $s$ and $t$, simultaneously \cite{hayashi2016fully}. Algorithm \ref{algo:bidirectionalBFS} shows the pseudo-code of our distance-bounded shortest path search algorithm. We use two sets of vertices $\mathcal{P}_s$ and $\mathcal{P}_t$ to keep track of visited vertices from $s$ and $t$. We use two queues $\mathcal{Q}_s$ and $\mathcal{Q}_t$ to conduct both a forward search from $s$ and a reverse search from $t$. Furthermore, we use two integers $d_s$ and $d_t$ to maintain the current distances from $s$ and $t$, respectively.

During initialization, we set $\mathcal{P}_s$ and $\mathcal{P}_t$ to $\{s\}$ and $\{t\}$, and enqueue $s$ and $t$ into $\mathcal{Q}_s$ and $\mathcal{Q}_t$, respectively. In each iteration, we increment $d_s$ or $d_t$ and expand $\mathcal{P}_s$ or $\mathcal{P}_t$ by running either a forward search (FS) or a reverse search (RS) as long as $\mathcal{P}_s$ and $\mathcal{P}_t$ have no any common vertex or $d_s + d_t$ is equal to the upper bound $d^{\top}_{st}$, and $\mathcal{Q}_s$ and $\mathcal{Q}_t$ are not empty. In the forward search from $s$, we examine the neighbors $N_{G[V \backslash R]}(v)$ of each vertex $v \in \mathcal{Q}_s$. Suppose we are visiting a vertex $w \in N_{G[V \backslash R]}(v)$, if $w$ is included in vertex set $\mathcal{P}_t$, then it means that we find a shortest path to vertex $t$ of length $d_s + 1 + d_t$, because the reverse search from $t$ had already visited $w$ with distance $d_t$. At this stage, we return $d_s + 1 + d_t$ as the answer since we already know $d_s + d_t + 1 \leq d_G(s, t) \leq d^{\top}_{st}$. Otherwise, we add vertex $w$ to $\mathcal{P}_s$ and enqueue $w$ into a new queue $\mathcal{Q}_{new}$. When we could not find the shortest distance in the iteration, we replace $\mathcal{Q}_s$ with $\mathcal{Q}_{new}$ and increase $d_s$ by 1, and check if $d_s + d_t= d^{\top}_{st}$. If it holds, then we return $d^{\top}_{st}$ since $d^{\top}_{st} \leq d_G(s, t) \leq d_s + d_t + 1$.


\begin{algorithm}
 \SetAlgoLined
 \caption{Distance-Bounded Shortest Path Search}
 \label{algo:bidirectionalBFS}
 \KwIn{$G[V \backslash R]$, $s$, $t$, $d^{\top}_{st}$}
 \KwOut{$d_{G[V \backslash R]}(s, t)$}
 $\mathcal{P}_s \gets \{s\}$, $\mathcal{P}_t \gets \{t\}$, $d_s \gets 0$, $d_t \gets 0$ \\
 Enqueue $s$ to $\mathcal{Q}_s$, $t$ to $\mathcal{Q}_t$ \\
 \While{$\mathcal{Q}_s$ and $\mathcal{Q}_t$ are not empty}{
 	\uIf{$|\mathcal{P}_s| \leq |\mathcal{P}_t|$}{
      $found \gets \texttt{FS}(\mathcal{Q}_s)$
    }
    \Else{
      $found \gets \texttt{RS}(\mathcal{Q}_t)$
    }
    \uIf{found = true}{
      \textbf{return} $d_s + 1 + d_t$
    }
    \ElseIf{$d_s + d_t = d^{\top}_{st}$}{
    	\textbf{return} $d^{\top}_{st}$
    }
 }
 \textbf{return} $\infty$
\end{algorithm}

\begin{figure}[h]
\centering
\includegraphics[scale=0.3]{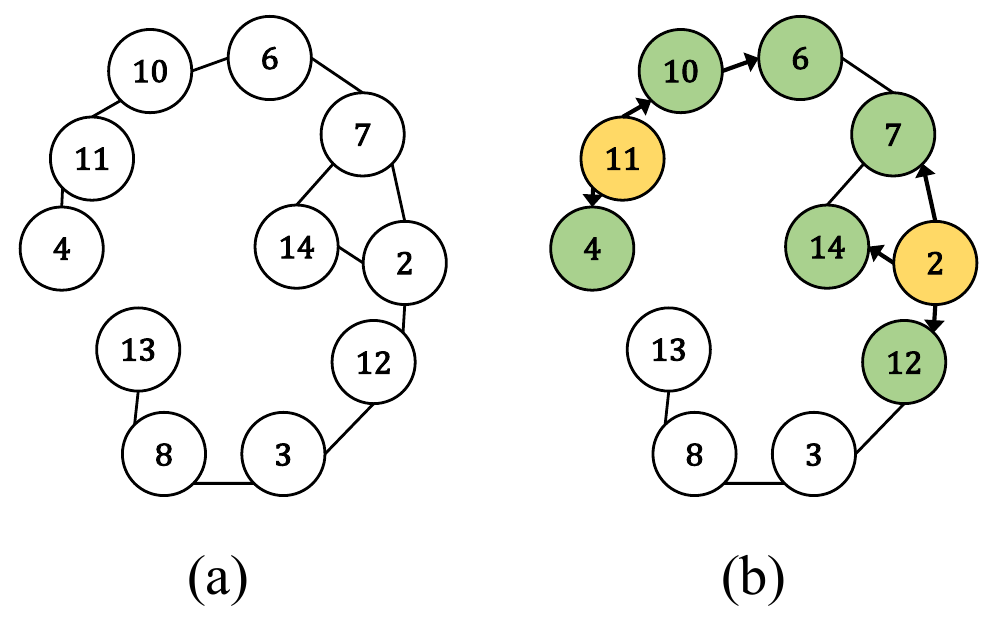}
\vspace{-0.2cm}
\caption{An illustration of the distance-bounded shortest path search algorithm \cite{hayashi2016fully}: (a) shows the sparsified graph after removing three landmarks $\{1, 5, 9\}$ from the graph in Figure \ref{fig:highway-cover-labelling}(a); (b) shows an example of computing the bounded distance between vertices $2$ and $11$ as highlighted in yellow, and green vertices denote the visited vertices in the forward and reverse searches.
}
\label{fig:sparsified-graph}
\end{figure}

\begin{example}
In Figure \ref{fig:sparsified-graph}(b), the upper distance bound between vertices 2 and 11 is 3, as computed in Example \ref{exa:upperbound}. Suppose that we run BFSs from vertices 2 and 11 respectively. First, a forward search from 2 enqueues its neighbors 7, 12 and 14 into $\mathcal{Q}_2$ and increases $d_2$ by 1. Then a reverse search from 11 enqueues 4 and 10 into $\mathcal{Q}_{11}$ and also sets $d_{11}$ to 1. At this stage, we have not found any common vertex between $\mathcal{Q}_2$ and $\mathcal{Q}_{11}$, and $d_2 + d_{11}=2$ which is less the upper bound $3$. Therefore, we continue to start a search from the vertices in $\mathcal{Q}_{11}$, which enqueues 5 into $\mathcal{Q}_{11}$ and increments $d_{11}$ to 2. Now, we have $d_2 + d_{11} == 3$ reaching the upper bound, hence we do not need to continue our search.

\end{example}

\subsection{Correctness}

The correctness of our querying framework can be proven based on the following two lemmas. More specifically, Lemma \ref{lemma:upper-distance-bound} can be derived by the highway cover property and the definition of $d^{\top}_{st}$. Lemma \ref{lemma:bounded-distance} can also be proven by the property of shortest path and the definition of the sparsified graph $G[V \backslash R]$.

\begin{lemma}\label{lemma:upper-distance-bound}
For a highway cover distance labelling $L$ over ($G,H$), we have $d^{\top}_{st}\geq d_G(s,t)$ for any two vertices $s$ and $t$ of $G$, where $d^{\top}_{st}$ is computed using $L$ and $H$.
\end{lemma}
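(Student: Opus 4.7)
The plan is to observe that the quantity $d^{\top}_{st}$ is by construction a minimum over sums of the form $\delta_L(r_i,s) + \delta_H(r_i,r_j) + \delta_L(r_j,t)$, and that each such sum corresponds to the length of a concrete walk in $G$ from $s$ to $t$ that passes through the landmarks $r_i$ and $r_j$. Once this is recognized, the inequality $d^{\top}_{st} \geq d_G(s,t)$ is just a double application of the triangle inequality.

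More concretely, I would first fix an arbitrary pair of distance entries $(r_i, \delta_L(r_i, s)) \in L(s)$ and $(r_j, \delta_L(r_j, t)) \in L(t)$ that is considered in the definition of $d^{\top}_{st}$. Since $L$ is a distance labelling over $G$, we have $\delta_L(r_i,s) = d_G(r_i,s)$ and $\delta_L(r_j,t) = d_G(r_j,t)$ by definition; and since $H=(R,\delta_H)$ is a highway, $\delta_H(r_i,r_j) = d_G(r_i,r_j)$. I would then apply the triangle inequality (1) twice: once to get $d_G(s,t) \leq d_G(s,r_i) + d_G(r_i,t)$, and once to get $d_G(r_i,t) \leq d_G(r_i,r_j) + d_G(r_j,t)$. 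Combining yields
\[
d_G(s,t) \;\leq\; d_G(s,r_i) + d_G(r_i,r_j) + d_G(r_j,t) \;=\; \delta_L(r_i,s) + \delta_H(r_i,r_j) + \delta_L(r_j,t).
\]
Since this holds for every admissible pair $(r_i,r_j)$, taking the minimum over all such pairs gives exactly $d_G(s,t) \leq d^{\top}_{st}$.

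To close the argument I also need to confirm that the set on the right-hand side of the defining minimum is non-empty, i.e. that $d^{\top}_{st} < \infty$ is not vacuous; otherwise the claim $d^{\top}_{st} \geq d_G(s,t)$ is meaningless whenever $d_G(s,t)$ is finite. For $s,t \in V \setminus R$ this is immediate from the highway cover property (Definition \ref{def:highway-cover}): picking any $r \in R$ supplies witnesses $r_i \in P_{rs}$ with $(r_i, \delta_L(r_i,s)) \in L(s)$ and $r_j \in P_{rt}$ with $(r_j, \delta_L(r_j,t)) \in L(t)$. For the corner cases where $s$ or $t$ lies in $R$, I would treat the landmark as contributing its own trivial distance entry (distance $0$ to itself), so the same chain of inequalities applies with $r_i = s$ or $r_j = t$.

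There is essentially no obstacle here: the entire content of the lemma is the triangle inequality together with the definitions of distance labelling and highway. The only point that deserves a sentence of care is the non-emptiness of the minimum, which is exactly where the highway cover property is used, and the handling of the degenerate case when an endpoint is itself a landmark.
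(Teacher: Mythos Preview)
Your proposal is correct and matches the paper's approach: the paper does not give a formal proof but simply states that the lemma ``can be derived by the highway cover property and the definition of $d^{\top}_{st}$,'' which is exactly the triangle-inequality argument you spell out. One minor remark: your non-emptiness discussion is unnecessary for the bare inequality, since an empty minimum is $\infty$ by convention and $\infty \geq d_G(s,t)$ holds trivially; the highway cover property is not actually needed for this direction.
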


\begin{lemma}\label{lemma:bounded-distance}
For any two vertices $\{s, t\}\subseteq V\backslash R$, if there is a shortest path between $s$ and $t$ in $G$ that does not include any vertex in $R$, then $d_G(s,t)=d_{G[V\backslash R]}(s,t)$ holds.
\end{lemma}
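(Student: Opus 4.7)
The plan is to establish the equality by proving two inequalities, which together pin down $d_{G[V\setminus R]}(s,t)$ exactly to $d_G(s,t)$.

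First I would argue the easy direction, $d_{G[V\setminus R]}(s,t) \geq d_G(s,t)$. Since $G[V\setminus R]$ is an induced subgraph of $G$, every $s$-$t$ path in $G[V\setminus R]$ is also an $s$-$t$ path in $G$. Hence the shortest-path distance in the subgraph cannot be smaller than the shortest-path distance in the whole graph. This holds unconditionally, without needing the hypothesis that a landmark-free shortest path exists.

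Next I would use the hypothesis to establish the reverse inequality $d_{G[V\setminus R]}(s,t) \leq d_G(s,t)$. By assumption there exists a shortest $s$-$t$ path $P$ in $G$ whose vertex set is contained in $V\setminus R$. Since $s, t \in V\setminus R$ and every internal vertex of $P$ also lies in $V\setminus R$, and every edge of $P$ has both endpoints in $V\setminus R$, the path $P$ is entirely preserved in the induced subgraph $G[V\setminus R]$. Therefore $P$ is an $s$-$t$ path in $G[V\setminus R]$ of length $d_G(s,t)$, which gives $d_{G[V\setminus R]}(s,t) \leq |P| = d_G(s,t)$.

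Combining the two inequalities yields $d_G(s,t) = d_{G[V\setminus R]}(s,t)$, as required. There is no real obstacle here; the only subtlety is to be explicit that an induced subgraph preserves a path exactly when every vertex of the path survives, which is precisely what the hypothesis guarantees.
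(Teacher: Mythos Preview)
Your proof is correct and is precisely the argument the paper has in mind: the paper does not spell out a proof for this lemma but simply remarks that it ``can also be proven by the property of shortest path and the definition of the sparsified graph $G[V \backslash R]$,'' which is exactly the two-inequality argument you give.
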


Thus, the following theorem holds:

\begin{theorem}
Let $G=(V,E)$ be a graph, $H$ a highway over $G$ and $L$ a highway cover distance labelling. Then, for any two vertices $\{s,t\}\subseteq V$, the querying framework over $(G, H, L)$ yields $d_G(s, t)$. 

\begin{proof}
We consider two cases: (1) $P_{st}$ contains at least one landmark. In this case, By Lemma \ref{lemma:upper-distance-bound} and the definition of the highway cover property, we have $d^{\top}_{st}=d_G(s,t)$. (2) $P_{st}$ does not contain any landmark. By Lemma \ref{lemma:bounded-distance}, we have $d_{G[V\backslash R]}(s,t)=d_G(s,t)$.
\end{proof}
\end{theorem}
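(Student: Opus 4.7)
The plan is to follow the natural case split of the querying framework itself, namely whether some shortest $s$--$t$ path in $G$ passes through a landmark in $R$. I would first characterize what Algorithm~\ref{algo:bidirectionalBFS} actually returns when $\{s,t\}\subseteq V\backslash R$: either (a) the bidirectional BFS meets at a common vertex and returns $d_{G[V\backslash R]}(s,t)$, or (b) the termination check $d_s + d_t = d^{\top}_{st}$ fires first and returns $d^{\top}_{st}$. In both branches the returned value is $\min(d_{G[V\backslash R]}(s,t),\; d^{\top}_{st})$. Since $d_{G[V\backslash R]}(s,t) \geq d_G(s,t)$ trivially (removing vertices cannot shorten distances) and $d^{\top}_{st} \geq d_G(s,t)$ by Lemma~\ref{lemma:upper-distance-bound}, it then suffices to exhibit, in each case, one of the two quantities that actually equals $d_G(s,t)$.

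For Case A ($P_{st} \cap R \neq \emptyset$), I would fix any $r \in P_{st} \cap R$ and invoke the highway cover property (Definition~\ref{def:highway-cover}) for $s$, $t$, and $r$ to obtain landmarks $r_i \in P_{rs}$ and $r_j \in P_{rt}$ whose distance entries lie in $L(s)$ and $L(t)$ respectively. A concatenation argument then shows that $r_i$ and $r_j$ sit on a common shortest $s$--$t$ path: the shortest-sub-path property lets me splice the $s$--$r_i$ portion of a shortest $r$--$s$ path into a shortest $s$--$t$ path through $r$, and symmetrically for $r_j$. Additivity of distances along this composite path then yields
\[
d_G(s,t) \;=\; \delta_L(r_i, s) + \delta_H(r_i, r_j) + \delta_L(r_j, t),
\]
which realizes the $\min$ defining $d^{\top}_{st}$, giving $d^{\top}_{st} \leq d_G(s,t)$; combined with Lemma~\ref{lemma:upper-distance-bound}, $d^{\top}_{st} = d_G(s,t)$. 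For Case B ($P_{st} \cap R = \emptyset$), Lemma~\ref{lemma:bounded-distance} directly gives $d_{G[V\backslash R]}(s,t) = d_G(s,t)$, and the $\min$ evaluates to $d_G(s,t)$ regardless of $d^{\top}_{st}$.

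The main obstacle I anticipate is the concatenation argument in Case~A: the hypothesis $r_i \in P_{rs}$ only certifies membership in some shortest $r$--$s$ path, not immediately in a shortest $s$--$t$ path, so rigour requires chaining two applications of the shortest-sub-path property to produce a single shortest $s$--$t$ path through both $r_i$ and $r_j$ before distance additivity can be invoked. A few edge cases also deserve brief treatment: when $r_i=r$ or $r_j=r$ the decomposition collapses trivially, and when $\{s,t\} \cap R \neq \emptyset$ the claim reduces to a direct lookup against $\delta_H$ paired with the labelled endpoint's distance entries. Once these are dispatched, the theorem follows by assembling the $\min$-characterization above with the two case results.
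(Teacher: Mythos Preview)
Your proposal is correct and follows essentially the same approach as the paper: the identical case split on whether $P_{st}$ meets $R$, invoking Lemma~\ref{lemma:upper-distance-bound} together with the highway cover property in the first case and Lemma~\ref{lemma:bounded-distance} in the second. Your treatment is considerably more detailed than the paper's two-sentence proof---in particular, your explicit $\min$-characterization of the algorithm's output and the splicing argument showing $r_i,r_j$ lie on a common shortest $s$--$t$ path fill in steps the paper leaves to the reader.
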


\section{Optimization Techniques}\label{section:optimization}
In this section, we discuss optimization techniques for label construction, label compression, and query processing.

\begin{table*}[ht]
  \centering
  \caption{Datasets, where $|G|$ denotes the size of a graph $G$ with each edge appearing in the forward and reverse adjacency lists and being represented by 8 bytes. }
  \label{table:datasets}
  \scalebox{1}{
    \begin{tabular}{| l l l || r r r r r r ||c|} 
      \hline
      Dataset & Network & Type & $n$ & $m$ & $m/n$ & avg. deg & max. deg & $|G|$ & \hspace{0cm} Sources \hspace{0.5cm}\\
      \hline\hline
      Skitter & computer & undirected & 1.7M & 11M  & 6.5 & 13.081 & 35455 & 85 MB & \cite{konect:2017} \\
      Flickr & social & undirected & 1.7M & 16M & 9.1 & 18.133 & 27224 & 119 MB & \cite{konect:2017} \\
      Hollywood & social & undirected & 1.1M & 114M & 49.5 & 98.913 & 11467 & 430 MB & \cite{BoVWFI,BRSLLP} \\
      Orkut & social & undirected & 3.1M & 117M & 38.1 & 76.281 & 33313 & 894 MB & \cite{konect:2017} \\
      enwiki2013 & social & directed & 4.2M & 101M & 21.9 & 43.746 & 432260 & 701 MB & \cite{BoVWFI,BRSLLP} \\
      LiveJournal & social & directed & 4.8M & 69M & 8.8 & 17.679 & 20333 & 327 MB & \cite{konect:2017} \\ \hline
      Indochina & web & directed & 7.4M & 194M & 20.4 & 40.725 & 256425 & 1.1 GB & \cite{BoVWFI,BRSLLP} \\
      it2004 & web & directed & 41M & 1.2B & 24.9 & 49.768 & 1326744 & 7.7 GB & \cite{BoVWFI,BRSLLP} \\
      Twitter & social & directed & 42M & 1.5B & 28.9 & 57.741 & 2997487 & 9.0 GB & \cite{BoVWFI,BRSLLP} \\
      Friendster & social & undirected & 66M & 1.8B & 22.5 & 45.041 & 4006 & 13 GB & \cite{leskovec2015snap} \\ 
      uk2007 & web & directed & 106M & 3.7B & 31.4 & 62.772 & 979738 & 25 GB & \cite{BoVWFI,BRSLLP} \\
      ClueWeb09 & computer & directed & 2B & 8B & 5.98 & 11.959 & 599981958 & 55 GB & \cite{nr} \\ 
      \hline
    \end{tabular}
  }
\end{table*}

\subsection{Label Construction}
A technique called Bit-Parallelism (BP) has been previously used in several methods \cite{akiba2013fast, hayashi2016fully} to speed up the label construction process. The key idea of BP is to perform BFSs from a given landmark $r$ and up to 64 of its neighbors simultaneously, and encode the relative distances (-1, 0 or 1) of these neighbors w.r.t. the shortest paths between $r$ and each vertex $v$ into a 64-bit unsigned integer. In the work \cite{akiba2013fast}, 
BP was applied to construct bit-parallel labels from initial vertices without pruning, which aimed to leverage the information from these bit-parallel labels to cover more shortest paths between vertices. Then, both bit-parallel labels and normal labels are constructed in the pruned BFSs.
The work in \cite{hayashi2016fully} also used BP to construct thousands of bit-parallel shortest-path trees (SPTs) because it is very costly to construct thousands of normal SPTs in memory owing to their prohibitively large space requirements and very long construction time. 

In our work, we develop a simple yet rigorous parallel algorithm (HL-P) which can run parallel BFSs from multiple landmarks (depending on the number of processors) to construct labelling in an extremely efficient way for massive networks, with much less time as will be demonstrated in our experiments.  

\subsection{Label Compression}
The choice of the data structure for labels may significantly affect the performance of index size and memory usage. As noted in \cite{li2017experimental}, some works \cite{abraham2012hierarchical,delling2014robust} did not elaborate on what data structure they have used for representing labels. Nonetheless, for the works that are most relevant to ours, such as FD \cite{hayashi2016fully} and PLL \cite{akiba2013fast}, they used 32-bit integers to represent vertices and 8-bit integers to represent distances for normal labels. In addition to this, they also used 64-bits to encode the distances from a landmark to up to 64 of its neighbors in their shortest paths to other vertices. Since our approach only selects a very small number of landmarks to construct the highway cover labelling (usually no more than 100 landmarks), we may use 8 bits to represent landmarks and another 8 bits to store distances for labels. In order to fairly compare methods from different aspects, we have implemented our methods using both 32 bits and 8 bits for representing vertices in labels. However, different from the BP technique that uses 64-bits to encode the distance information of up to 64 neighbours of a landmark, our parallel algorithm (HL-P) does not use a different data structure for labels constructed in parallel BFSs.  

\subsection{Query Processing}
We show that computing the upper bound $d^{\top}_{st}$ can be optimized based on the observation, captured by the following lemma.

\begin{lemma}
For a highway cover distance labelling $L$ over $(G,H)$, where $G=(V,E)$ and $H=(R, \delta_H)$, and any $\{s,t\} \subseteq V\backslash R$, if a landmark $r$ appears in both $L(s)$ and $L(t)$, then $\delta_L(r, s) + \delta_L(r, t)\leq \delta_L(r, s) + \delta_H(r, r') + \delta_L(r', t)$ holds for any other $r'\in R$. 
\end{lemma}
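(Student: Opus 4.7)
My plan is to cancel the common summand $\delta_L(r,s)$ from both sides and reduce the statement to a single application of the triangle inequality in $G$. After subtraction, it suffices to establish
\[
\delta_L(r, t) \leq \delta_H(r, r') + \delta_L(r', t),
\]
and re-adding $\delta_L(r,s)$ recovers the original claim.

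To verify this reduced inequality, I would unfold each of the three terms into an actual graph distance. Since $r \in L(t)$, by the definition of a distance labelling we have $\delta_L(r, t) = d_G(r, t)$; by the definition of a highway, $\delta_H(r, r') = d_G(r, r')$. The remaining term $\delta_L(r', t)$ is only meaningful when $r' \in L(t)$ (otherwise the right-hand side can be treated as $+\infty$ and the bound is trivial), in which case again $\delta_L(r', t) = d_G(r', t)$. The reduced inequality then becomes
\[
d_G(r,t) \leq d_G(r, r') + d_G(r', t),
\]
which is exactly the triangle inequality in $G$ recorded in Section~\ref{section:preliminaries}.

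There is no genuine obstacle here: the lemma is essentially a restatement of the triangle inequality, made possible by the fact that every entry stored in $L$ and every entry stored in the highway $H$ encodes a true graph distance. The real significance of the lemma lies not in the proof but in its query-processing consequence: when computing the upper bound $d^{\top}_{st}$ of Equation~(4), any landmark $r$ appearing in both $L(s)$ and $L(t)$ already provides an optimal diagonal choice $r_i = r_j = r$, so the minimisation need not pair $r$ with any other landmark $r'$ through the highway. This collapses a potentially quadratic scan over landmark pairs into a linear scan over the common landmarks of $L(s)$ and $L(t)$, which is presumably how the authors will exploit the lemma in the query algorithm.
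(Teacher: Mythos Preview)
Your proof is correct and follows essentially the same route as the paper: cancel the common term and reduce to the triangle inequality $d_G(r,t)\le d_G(r,r')+d_G(r',t)$, using that label and highway entries encode true graph distances. The paper's two-line proof additionally invokes the highway cover property before appealing to the triangle inequality, but that extra step is not actually needed for the argument, and your handling of the case $r'\notin L(t)$ (treating the right-hand side as $+\infty$) is a clean way to cover a detail the paper leaves implicit.
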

\begin{proof}
By the definition of the highway cover property, we know that $r$ is not in the shortest path between $r'$ and $t$. Then by triangle inequality in Equation \ref{distance_metric}, this lemma can be proven.
\end{proof}

Thus, in order to efficiently compute the upper bound $d^{\top}_{st}$, for any landmarks that appear in both $L(s)$ and $L(t)$, we compute the $r$-constrained shortest path distance between $s$ and $t$ using Equation \ref{eq:twoHop_query_distance}, while for a landmark $r'$ that only appear in one of $L(s)$ and $L(t)$, we use Equation \ref{eq:upper-distance-bound} to calculate the $r'$-constrained shortest path distance between $s$ and $t$. This would lead to more efficient computations for queries when the landmarks appear in both labels of two vertices.

\section{EXPERIMENTS}\label{section:experiments}
To compare the proposed method with baseline approaches, we have implemented our method in C++11 using STL libraries and compiled using gcc 5.5.0 with the -O3 option. We performed all the experiments using a single thread on a Linux server (having 64 AMD Opteron(tm) Processors 6376 with 2.30GHz and 512GB of main memory) for sequential version of the proposed method and up to 64 threads for parallel version of the proposed method.

\begin{figure}[h]
\centering
\includegraphics[scale=0.45]{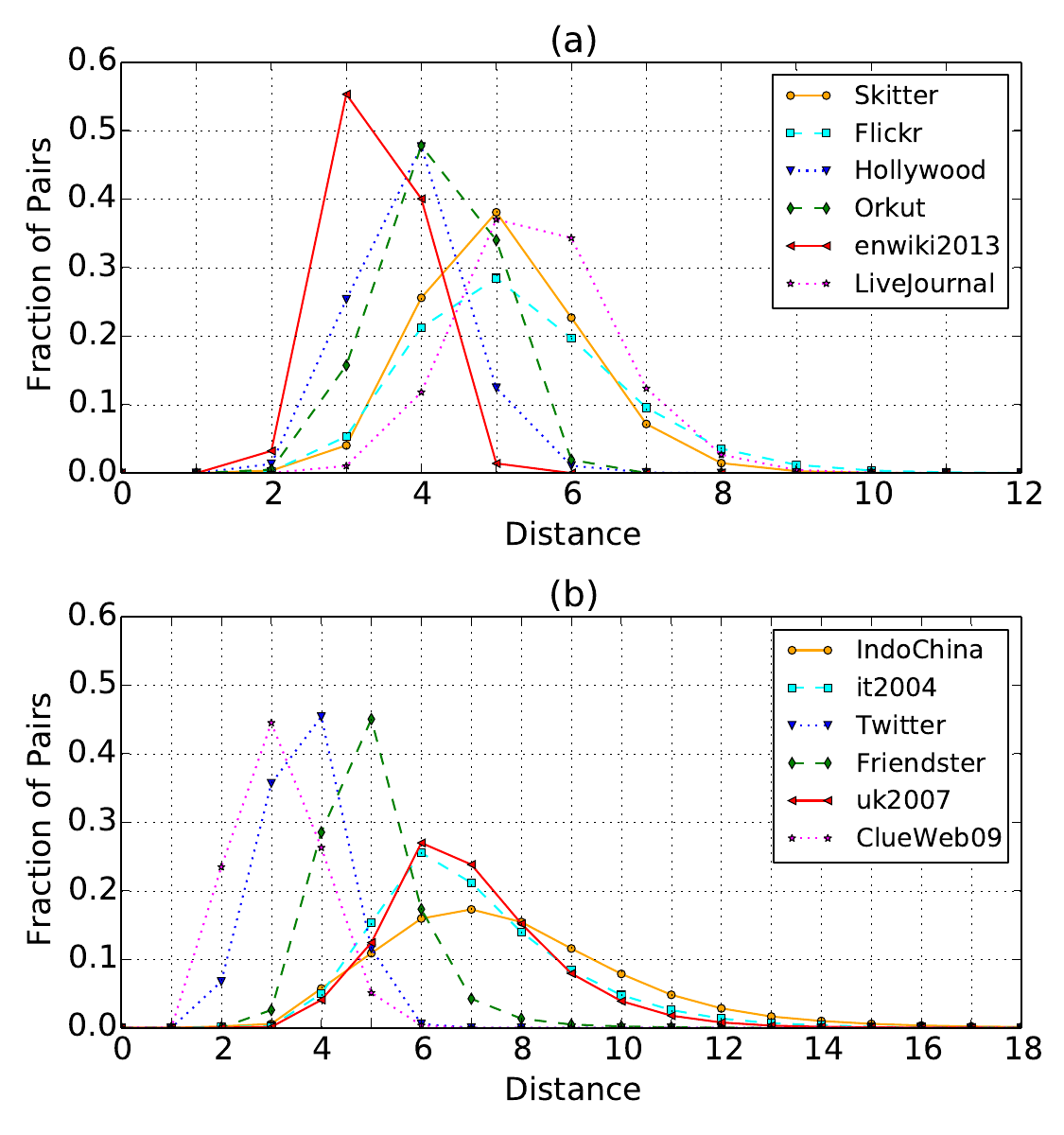}
\vspace{-0.7cm}
\caption{Distance distribution of 100,000 random pairs of vertices on all the datasets.}\vspace{-0cm}
\label{fig:distance-distribution}
\end{figure}

\begin{table*}[h!]
 \centering
 \caption{Comparison of construction times and query times between our methods, i.e., HL-P and HP, and the state-of-the-art methods, where CT denotes the CPU clock time in seconds for labelling construction, QT denotes the average query time in milliseconds, and ALS denotes the average number of entries per label.}
 \label{table:performance}
 \begin{tabular}{| l || r r r r r | r r r r | r | r r r r |}  \hline
	\multirow{2}{*}{Dataset} & \multicolumn{5}{c|}{CT[s]}&\multicolumn{5}{c|}{QT[ms]}&\multicolumn{4}{c|}{ALS} \\\cline{2-15}
    & HL-P & HL & FD & PLL & IS-L & HL & FD & PLL & IS-L & Bi-BFS & HL & FD & PLL & IS-L \\

   \hline\hline
	Skitter & 2&13  & 30 & 638 & 1042 & 0.067 & 0.043 & 0.008 & 3.556 & 3.504 & 12 & 20+64 & 138+50 & 51 \\
    Flickr & 2&14  & 41 & 1330 & 8359 & 0.015 & 0.028 & 0.01 & 33.760  & 4.155 & 10 & 20+64& 290+50 & 50 \\
    Hollywood & 3& 17  & 107 & 31855 & DNF & 0.047 & 0.075 & 0.051 & - & 6.956 & 12 & 20+64 &2206+50 & - \\
    Orkut & 10&62  & 366 & DNF & DNF & 0.224 & 0.251 & - & - & 21.086 & 11 &20+64 & - & - \\
    enwiki2013 & 9& 77  & 308 & 22080 & DNF & 0.190 & 0.131 & 0.027 & - & 19.423 & 10 &20+64& 471+50 & - \\
    LiveJournal & 9&77  & 166 & DNF & 20583 & 0.088 & 0.111 & - & 56.847 & 17.264 & 13 &20+64& - & 69 \\ \hline
    Indochina & 8&50  & 144 & 9456 & DNF & 1.905 & 1.803 & 0.02 & - & 9.734 & 5 &20+64& 441+50 & - \\
    it2004 & 66&304  & 1623 & DNF & DNF & 2.684 & 2.118 & - & - & 92.187 & 10 &20+64& - & - \\
    Twitter & 133&1380  & 1838 & DNF & DNF & 1.424 & 0.432 & - & - & 426.949 & 14 &20+64& - & - \\
    Friendster & 135&2229  & 9661 & DNF & DNF & 1.091 & 1.435 & - & - & 534.576 & 19 &20+64& - & - \\
    uk2007 & 110&1124  & 6201 & DNF & DNF & 11.841 & 18.979 & - & - & 355.688 & 8 &20+64& - & - \\
    ClueWeb09 & 4236& 28124  & DNF & DNF & DNF & 0.309 & - & - & - & - & 2 &- &- & - \\
   \hline
 \end{tabular}
 \end{table*}
 \begin{table}
  \caption{Comparison of labelling sizes between our methods, i.e., HL(8) and HL, and the state-of-the-art methods.}
 \label{table:index-size}
 \begin{tabular}{| l || r r r r r |}  \hline
   Dataset & HL(8) & HL & FD & PLL & IS-L \\
   \hline\hline
   Skitter & 42MB & 102MB  & 202MB & 2.5GB & 507MB \\
   Flickr & 34MB & 81MB  & 178MB & 3.7GB & 679MB \\
   Hollywood & 28MB &67MB & 293MB & 13GB & - \\
   Orkut & 70MB & 170MB & 756MB & - & - \\
   enwiki2013 & 83MB & 200MB  & 743MB & 12GB & - \\
   LiveJournal & 123MB & 299MB  & 778MB & - & 3.8GB \\ \hline
   Indochina & 81MB & 192MB & 999MB & 21GB & - \\
   it2004 & 855MB & 2GB & 5.6GB & - & - \\
   Twitter & 1.2GB & 2.8GB  & 4.8GB & - & - \\
   Friendster & 2.5GB & 5.2GB & 11.8GB & - & - \\
   uk2007 & 1.8GB & 4.3GB & 14.1GB & - & - \\
   ClueWeb09 & 4.7GB & 9GB  & - & - & - \\
   \hline
 \end{tabular}\vspace{-0.1cm}
\end{table}

\subsection{Datasets}
In our experiments, we used 12 large-scale real-world complex networks, which are detailed in Table \ref{table:datasets}. These networks have vertices and edges ranging from millions to billions. Among them, the largest network is ClueWeb09 which has 2 billions of vertices and 8 billions of edges. We included this network in our experiments for the purpose of evaluating the robustness and scalability of the proposed method. In previous works, the largest dataset that has been reported is uk2007 which has only around 100 millions of vertices and 3.7 billions of edges. For all these networks, we treated them as undirected and unweighted graphs. 

To investigate the query time of finding the distance between two vertices, we randomly sampled 100,000 pairs of vertices from all pairs of vertices in each network, i.e., $V\times V$. The distance distribution of these 100,000 randomly sampled pairs of vertices are shown in Figure \ref{fig:distance-distribution}(a)-\ref{fig:distance-distribution}(b), from which we can confirm that most of pairs of vertices in these networks have a small distance ranging from 2 to 8.


\subsection{Baseline Methods}
We compared our proposed method with three state-of-the-art methods. Two of these methods, namely fully dynamic (FD) \cite{hayashi2016fully} and IS-L \cite{fu2013label}, combine a distance labelling algorithm with a graph traversal algorithm for distance queries on complex networks. The third one is pruned landmark labelling (PLL) \cite{akiba2013fast} which is completely based on distance labelling to answer distance queries. Besides these, there are a number of other methods for answering distance queries, such as 
HDB \cite{jiang2014hop}, RXL and CRXL \cite{delling2014robust}, HCL \cite{jin2012highway}, HHL \cite{abraham2012hierarchical} and TEDI \cite{wei2010tedi}. However, since the experimental results of the previous works \cite{hayashi2016fully,akiba2013fast} have shown that FD outperforms HDB, RXL and CRXL, and PLL outperforms HCL, HHL and TEDI, we omit the comparison with these methods. 

In our experiments, the implementations of the baseline methods FD, IS-L and PLL were provided by their authors, which were all implemented in C++. We used the same parametric settings for running these methods as suggested by their authors. For instance, the number of landmarks is chosen to 20 for FD \cite{hayashi2016fully}, the number of bit-parallel BFSs is set to 50 for PLL \cite{akiba2013fast}, and $k$ is 6 for graphs larger than 1 million vertices for IS-L \cite{fu2013label}.

\subsection{Comparison with Baseline Methods}
To evaluate the performance of our proposed approach, we compared our approach with the baseline methods in terms of the construction time of labelling, the size of labelling, and querying time. The experimental results are presented in Tables \ref{table:performance} and \ref{table:index-size}, where DNF denotes that a method did not finish in one day or ran out of memory. In order to make a consistent comparison with the baseline methods \cite{hayashi2016fully,akiba2013fast,fu2013label}, we chose top 20 vertices as landmarks after sorting based on decreasing order of their degrees, and also used 32-bit integers to represent vertices and 8-bit integers to represent distances.

\subsubsection{Construction Time}
As shown in Table \ref{table:performance}, our proposed method (HL) has successfully constructed the distance labelling on all the datasets for a significantly less amount of time than the state-of-the-art methods. As compared to FD, our method is on average 5 times faster and have results on all the datasets. In contrast to this, FD failed to construct labelling for the largest dataset ClueWeb09. PLL failed for 7 out of 12 datasets, including the datasets Orkut and LiveJournal which have less than 120 millions of edges, due to its prohibitively high preprocessing time and memory requirements for building labelling. IS-L failed to construct labelling for all the datasets that have edges more than 100 million due to its very high cost for computing independent sets on massive networks, i.e. failed for 9 out of 12 datasets. We can also see from Table \ref{table:performance} that the parallel version of our method (HL-P) is much faster than the sequential version (HL). Compared with FD, HL-P is more than 50-70 times faster for the two large datasets Friendster and uk2007. This confirms that our method can construct labelling very efficiently and is scalable on large networks with billions of vertices and edges.

\begin{figure*}[ht]
\centering
\includegraphics[width=\textwidth]{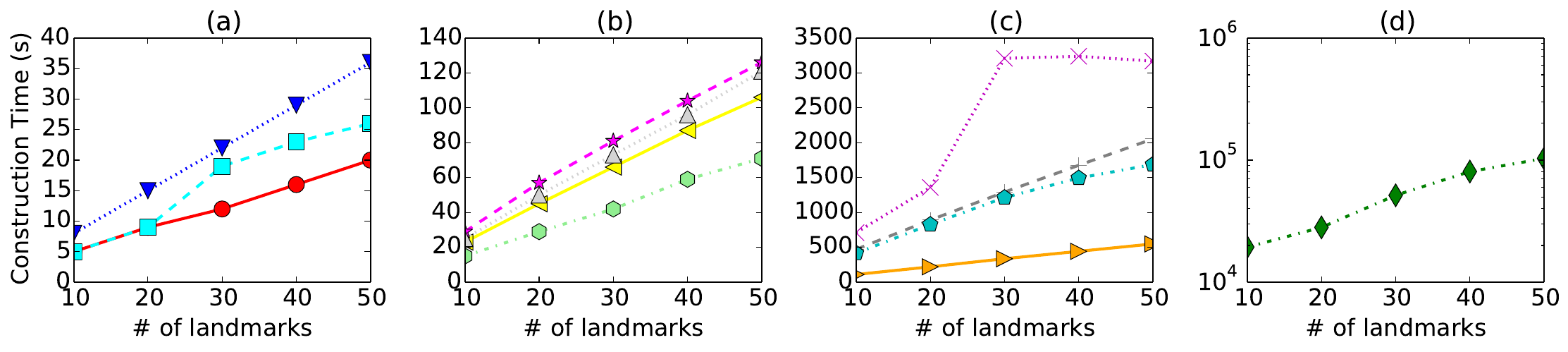}
\includegraphics[width=\textwidth]{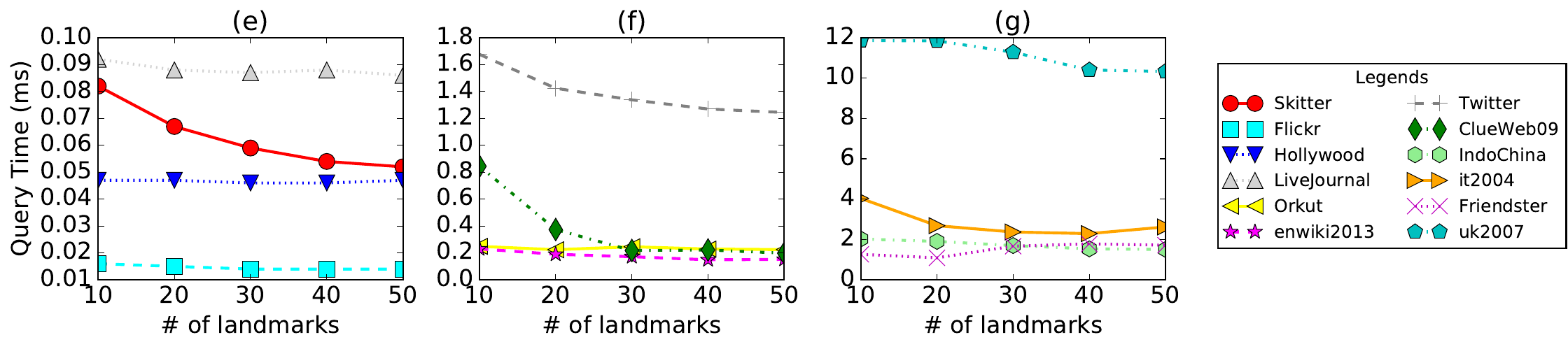}
\vspace{-0.5cm}
\caption{(a)-(d) Construction times using our method HL under 10-50 landmarks on all the datasets; (e)-(g) Query times using our method HL under 10-50 landmarks on all the datasets.}
\label{fig:querying-times}
\end{figure*}

\subsubsection{Labelling Size}
As we can see from Table \ref{table:index-size} that the labelling sizes of all the datasets constructed by the proposed method are significantly smaller than the labelling sizes of FD and much smaller than PLL and IS-Label. Specifically, our labelling sizes using 32-bits representation of vertices (HL) are 2-5 times smaller than FD except for ClueWeb09 (as discussed before, FD failed to construct labelling for ClueWeb09), 7 times smaller than IS-Label on Skitter, Flickr and LiveJournal and more than 60 times smaller than PLL for Skitter, Flickr, Hollywood, enwiki2013 and Indochina. The compressed version of our method that uses 8-bits representation of vertices (i.e. HL(8)) produces further smaller index sizes as compared to uncompressed version (HL). 
Here, It is important to note that the labelling sizes of almost all the datasets are also significantly smaller than the original sizes of the datasets shown in Table \ref{table:datasets}. This also shows that our method is highly scalable on large networks in terms of the labellng sizes.

\subsubsection{Query Time}\label{subsubsection:queryTime}
the  This is due to a very small average labelling size (i.e., 12) as compared with FD and PLL (i.e., 20+64 and 2206+50, respectively) and a very small average distance. The average query time of HL on Twitter is 3 times slower than FD. This may be due to a large portion of covered pairs by FD as shown in Figure \ref{fig:covered-pairs} which contributes towards an effective bounded traversal on the sparsified network since the landmarks of Twitter have very high degrees and the average distance is also very small. Moreover, the average query times of HL and FD on Indochina, it2004, Friendster and uk2007 are more than 1ms due to comparatively large average distances than other datasets as shown in Figure \ref{fig:distance-distribution}(b). Note that all the baseline methods are not scalable enough to have results for ClueWeb09 and the average query time on ClueWeb09 of our method HL is small because of a very large portion of covered pairs and a small average label size. We also reported the average query time for online bidirectional BFS algorithm (Bi-BFS) using randomly selected 1000 pairs of vertices in Table \ref{table:performance}. As we can see that Bi-BFS has considerably long query times, which are not practicable in applications for performing distance queries in real time.

\subsection{Performance under Varying Landmarks}
We have also evaluated the performance of our method (HL) by varying the number of landmarks between 10 and 50, which are again selected based on highest degrees.

\subsubsection{Construction Time}
The construction times of our method HL against different numbers of landmarks (from 10 to 50) are shown in Figure \ref{fig:querying-times}(a)-\ref{fig:querying-times}(d). We can see that the construction times are linear in terms of the number of landmarks, which confirms the scalability of our method. In Figure \ref{fig:querying-times}(a)-\ref{fig:querying-times}(b), our method is able to construct labelling for 7 datasets under 50 landmarks from 20 seconds to 2 minutes, which is not possible with any state-of-the-art methods. In Figure \ref{fig:querying-times}(c), the construction time using 50 landmarks of Friendster is 3 times faster and the construction time of uk2007 is 4 times faster than FD using only 20 landmarks as shown in Table \ref{table:performance}. Figure \ref{fig:querying-times}(d) shows the construction time for ClueWeb09 which has 2 billion vertices and 8 billion edges. The significant improvement in construction time allows us to compute labelling for a large number of landmarks, leading to better pair coverage ratios to tighten upper distance bounds (will be further discussed in Section \ref{pair-coverage}).

\begin{figure*}[ht]
\centering
\includegraphics[width=\textwidth]{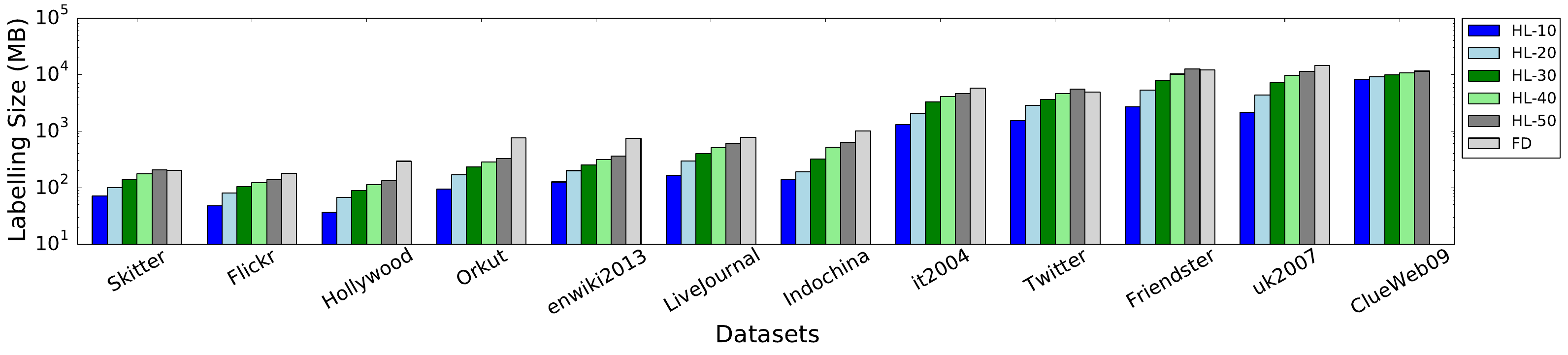}
\vspace{-0.7cm}
\caption{Labelling sizes using our method HL under 10-50 landmarks and using FD on all the datasets.}
\label{fig:labelling-size}
\vspace{0.4cm}
\includegraphics[width=\textwidth]{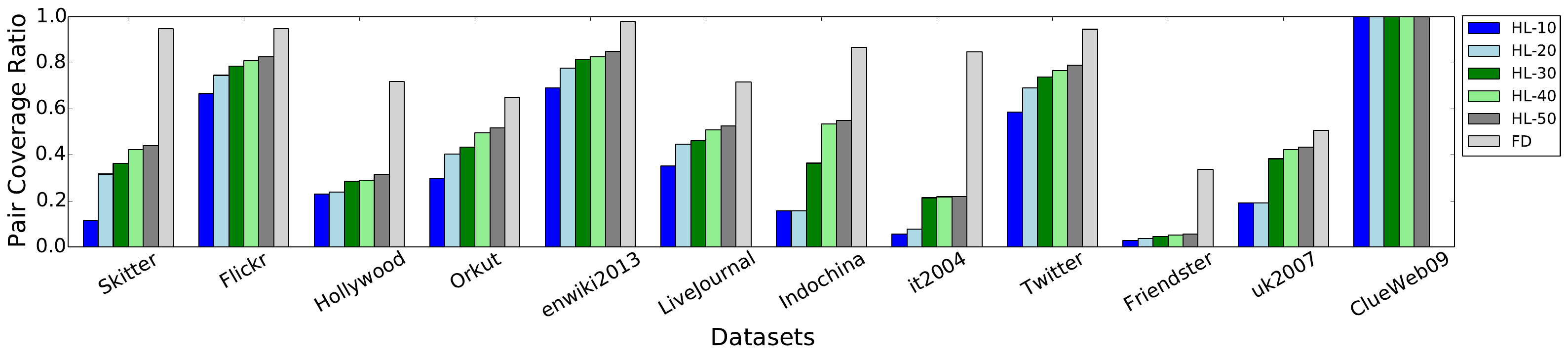}
\vspace{-0.7cm}
\caption{Pair coverage ratios using our method HL under 10-50 landmarks and using FD on all the dataset.}
\label{fig:covered-pairs}
\end{figure*}

\subsubsection{Labelling Size}
Figure \ref{fig:labelling-size} shows the labelling sizes of HL using 10, 20, 30, 40 and 50 landmarks on all the dataset, and of FD using only 20 landmarks on all the datasets except for ClueWeb09 (as discussed before, FD failed to construct labeling for ClueWeb09). It can be seen that the labelling sizes of HL increase linearly with the increased number of landmarks, and even the labelling sizes of HL using 50 landmarks are almost always smaller than the labelling sizes constructed by FD using only 20 landmarks. This reduction in labelling sizes enables us to save space and memory, thus makes our method scalable on large networks.

\subsubsection{Query Time}
Figure \ref{fig:querying-times} shows the impact of using different numbers of landmarks between 10 and 50 on average query time of our method. The average query times either decrease or remain the same when the number of landmarks increases, except for Orkut when using 30 landmarks and for Friendster when using landmarks greater than 20. In particular, on Friendster, labelling sizes are very large as shown in Figure \ref{fig:labelling-size} and the fraction of covered pairs (i.e., pair coverage ratio) is very small as shown in Figure \ref{fig:covered-pairs}, which may have slowed down our query processing due to a longer time for computing upper distance bounds and ineffective use of bounded-distance traversal.

\subsubsection{Pair Coverage}\label{pair-coverage}
Figure \ref{fig:covered-pairs} presents the ratios of pairs of vertices covered by at least one landmark (i.e., pair coverage ratios) in HL using 10-50 landmarks and in FD using 20 landmarks. As we can observe that the pair coverage ratios for HL increase when the number of landmarks increases and 40 turns out to be the better choice on the number of landmarks for most of the datasets. Specifically, pair coverage ratios on Orkut, enwiki2013, Indochina and uk2007 with 40 landmarks are good, resulting in better query times than using 20 landmarks, as shown in Figure \ref{fig:querying-times}. On datasets such as Hollywood and it2004, 30 landmarks are a better option than 40 landmarks because they only slightly differ in the pair coverage ratios and query times w.r.t. using 40 landmarks, but with reduced labelling sizes. The pair coverage ratios by FD are greater than HL on all the datasets except for ClueWeb09, which may be the reason behind its better query times for some datasets as shown in Table \ref{table:performance}. Note that, on ClueWeb09, we obtain almost hundred percentage for pair coverage due to its very high degree landmarks.

\section{RELATED WORK}\label{section:background}


A naive solution for exact shortest-path distance computation is to run the Dijkstra search for weighted graphs or BFS for unweighted graphs, from a source vertex to a destination vertex \cite{tarjan1983data}. To improve search efficiency, a bidirectional scheme can be used to run two such searches: one from the source vertex and the other from the destination vertex \cite{pohl1971bi}. Later on, Goldberg et al. \cite{goldberg2005computing} combined the bidirectional search technique with the A* algorithm to further improve the search performance. In their method, they precomputed labeling based on landmarks to estimate the lower bounds, and used that estimate with a bidirectional A* search for efficient computation of shortest-path distances. However, this method is known to work only for road networks and do not scale well on complex networks \cite{hayashi2016fully}.

To efficiently answer exact shortest-path distance queries on graphs, labelling-based methods have been developed with great success \cite{akiba2013fast,abraham2012hierarchical,fu2013label,jin2012highway,abraham2011hub,li2017experimental}. Most of them construct a labeling based on the idea of 2-hop cover \cite{cohen2003reachability}. It has also been shown that computing a minimal 2-hop cover labeling is NP-hard \cite{abraham2012hierarchical,cohen2003reachability}. In \cite{abraham2011hub}, the authors proposed a hub-based labeling algorithm (HL) which constructs hub labelling by processing contraction hierarchies (CH) and is among the fastest known algorithms for distance queries in road networks. However, the method is not feasible for complex networks as reported by the same authors and they thus proposed a hierarchical hub-labeling (HHL) algorithm for complex networks in \cite{abraham2012hierarchical}. In this work, a top-down method was used to maintain a shortest-path tree for every vertex in order to indicate all uncovered shortest-paths at each vertex. Due to very high storage and computation requirements, the method is also not scalable for handling large graphs. Another method called Highway Centric Labeling (HCL) was proposed by Jin et al. \cite{jin2012highway} which exploits highway structure of a graph. This method aimed to find a spanning tree which can assist in optimal distance labelling and used that spanning tree as a highway to compute a highway-based 2-hop labelling for fast distance computation. After that, in \cite{akiba2013fast}, Akiba et al. proposed the pruned landmark labeling (PLL) method which precomputes a distance-aware 2-hop cover index by performing a pruned breadth-first search (BFS) from every vertex. The idea is to prune vertices whose distance information can be obtained using a partially available 2-hop index constructed via previous BFSs. This work helps to achieve low construction cost and smaller index size due to reduced search space on million-scale networks. It has been shown that PLL outperforms other state-of-the-art methods available at the time of publication, including HHL \cite{abraham2012hierarchical}, HCL \cite{jin2012highway}  and TEDI \cite{wei2010tedi}. However, PLL is still not feasible for constructing 2-hop cover indices for billion-scale networks due to a very high memory requirement for labelling construction.

Fu et al. \cite{fu2013label} proposed IS-Label (IS-L) which gained significant scalability in precomputing 2-hop cover distance labellings for large graphs with hundreds of millions of vertices and edges. IS-L uses the notion of an independent set of vertices in a graph. First, it computes an independent set of vertices from a graph, then it constructs a graph by removing the independent set of vertices from the previous graph recursively and augments edges that preserve distance information after the removal of the independent set of vertices. All the vertices in the remaining graph preserve their distance information to/from each other. Generally, IS-L is regarded as a hybrid method that combines distance labelling with graph traversal for complex networks \cite{li2017experimental}. Following the same line of thought, 
very recently, Akiba et al. \cite{hayashi2016fully} proposed a method to accelerate shortest-path distances computation on large-scale complex networks. To the best of our knowledge, this work is most closely related to our work presented in this paper. The key idea of the method in \cite{hayashi2016fully} is to select a small set of landmarks $R$ and precompute shortest-path trees (SPTs) rooted at each $r \in R$. Given any two vertices $s$ and $t$, it first computes the upper bound by taking the minimum length among the paths that pass through $R$. Then a bidirectional BFS from $s$ to $t$ is conducted on the subgraph $G \backslash R$ to compute the shortest-path distances that do not pass through $R$ and take the minimum of these two results as the answer to an exact distance query. The experiments in \cite{hayashi2016fully} showed that this method can scale to graphs with millions of vertices and billions of edges, and outperforms the state-of-the-art exact methods PLL \cite{akiba2013fast}, HDB \cite{jiang2014hop}, RXL and CRXL \cite{delling2014robust} with significantly reduced construction time and index size, while the query times are higher but still remain among 0.01-0.06 for most of graphs with less than 5M vertices.

Although the method proposed in \cite{hayashi2016fully} has been tested on a large network with millions of vertices and billions of edges, it still fails to construct labelling on billion-scale networks in general, particularly with billions of vertices. In contrast, our proposed method not only constructs labellings linearly with the number of landmarks in large networks with billions of vertices, but also enables the sizes of labellings to be significantly smaller than the original network sizes. In addition to these, the deterministic nature of labelling allows us to achieve further gains in computational efficiency using parallel BFSs over multiple landmarks, which is highly scalable for handling billion-scale networks. \looseness=-1
\section{CONCLUSION}\label{section:conclusion}
\vspace{-0.07cm}
We have presented a scalable solution for answering exact shortest path distance queries on very large (billion-scale) complex networks. The proposed method is based on a novel labelling algorithm that can scale to graphs at the billion-scale, and a querying framework that combines a highway cover distance labelling with distance-bounded searches to enable fast distance computation. We have proven that the proposed labelling algorithm can construct HWC-minimal labellings that are independent of the ordering of landmarks, and have further developed a parallel labelling method to speed up the labelling construction process by conducting BFSs simultaneously for multiple landmarks. The experimental results showed that the proposed methods significantly outperform the state-of-the-art methods. For future work, we plan to investigate landmark selection strategies for further improving the performance of labelling methods.  


\bibliographystyle{ACM-Reference-Format}
\bibliography{references}

\end{document}